\newtheorem{theorem}{Theorem}
\newtheorem{definition}[theorem]{Definition}
\newtheorem{lemma}[theorem]{Lemma}
\newtheorem{proposition}[theorem]{Proposition}
\newenvironment{proof}[1][Proof]{\noindent\textbf{#1.} }{\ \rule{0.5em}{0.5em}}
\def\diagram{\m@th\leftwidth=\z@ \rightwidth=\z@ \topheight=\z@
\botheight=\z@ \setbox\@picbox\hbox\bgroup}
\def\enddiagram{\egroup\wd\@picbox\rightwidth\unitlength
\ht\@picbox\topheight\unitlength \dp\@picbox\botheight\unitlength
\hskip\leftwidth\unitlength\box\@picbox}
\def\bfig{\begin{diagram}}
\def\efig{\end{diagram}}
\def\ratchet#1#2{\ifnum#1<#2 \global #1=#2 \fi}
\def\putbox(#1,#2)#3{%
\horsize{\wideness}{#3} \divide\wideness by 2
{\advance\wideness by #1 \ratchet{\rightwidth}{\wideness}}
{\advance\wideness by -#1 \ratchet{\leftwidth}{\wideness}}
\vertsize{\highness}{#3} \divide\highness by 2
{\advance\highness by #2 \ratchet{\topheight}{\highness}}
{\advance\highness by -#2 \ratchet{\botheight}{\highness}}
\put(#1,#2){\makebox(0,0){$#3$}}}
\def\putlbox(#1,#2)#3{%
\horsize{\wideness}{#3}
{\advance\wideness by #1 \ratchet{\rightwidth}{\wideness}}
{\ratchet{\leftwidth}{-#1}}
\vertsize{\highness}{#3} \divide\highness by 2
{\advance\highness by #2 \ratchet{\topheight}{\highness}}
{\advance\highness by -#2 \ratchet{\botheight}{\highness}}
\put(#1,#2){\makebox(0,0)[l]{$#3$}}}
\def\putrbox(#1,#2)#3{%
\horsize{\wideness}{#3}
{\ratchet{\rightwidth}{#1}}
{\advance\wideness by -#1 \ratchet{\leftwidth}{\wideness}}
\vertsize{\highness}{#3} \divide\highness by 2
{\advance\highness by #2 \ratchet{\topheight}{\highness}}
{\advance\highness by -#2 \ratchet{\botheight}{\highness}}
\put(#1,#2){\makebox(0,0)[r]{$#3$}}}
\def\adjust[#1]{} 
\newdimen\tempdimen
\newdimen\xlen
\newdimen\ylen
\newsavebox{\tempboxa}%
\newsavebox{\tempboxb}%
\newsavebox{\tempboxc}%
\newdimen\w@dth
\def\setw@dth#1#2{\setbox\z@\hbox{\m@th$#1$}\w@dth=\wd\z@
\setbox\@ne\hbox{\m@th$#2$}\ifnum\w@dth<\wd\@ne \w@dth=\wd\@ne \fi
\advance\w@dth by 1.2em}
\def\t@^#1_#2{\allowbreak\def\n@one{#1}\def\n@two{#2}\mathrel
{\setw@dth{#1}{#2}
\mathop{\hbox to \w@dth{\rightarrowfill}}\limits
\ifx\n@one\empty\else ^{\box\z@}\fi
\ifx\n@two\empty\else _{\box\@ne}\fi}}
\def\t@@^#1{\@ifnextchar_{\t@^{#1}}{\t@^{#1}_{}}}
\def\to{\@ifnextchar^{\t@@}{\t@@^{}}}
\def\t@left^#1_#2{\def\n@one{#1}\def\n@two{#2}\mathrel{\setw@dth{#1}{#2}
\mathop{\hbox to \w@dth{\leftarrowfill}}\limits
\ifx\n@one\empty\else ^{\box\z@}\fi
\ifx\n@two\empty\else _{\box\@ne}\fi}}
\def\t@@left^#1{\@ifnextchar_{\t@left^{#1}}{\t@left^{#1}_{}}}
\def\toleft{\@ifnextchar^{\t@@left}{\t@@left^{}}}
\def\two@^#1_#2{\allowbreak
\def\n@one{#1}\def\n@two{#2}\mathrel{\setw@dth{#1}{#2}
\mathop{\vcenter{\lineskip\z@\baselineskip\z@
                 \hbox to \w@dth{\rightarrowfill}%
                 \hbox to \w@dth{\rightarrowfill}}%
       }\limits
\ifx\n@one\empty\else ^{\box\z@}\fi
\ifx\n@two\empty\else _{\box\@ne}\fi}}
\def\tw@@^#1{\@ifnextchar _{\two@^{#1}}{\two@^{#1}_{}}}
\def\two{\@ifnextchar ^{\tw@@}{\tw@@^{}}}
\def\tofr@^#1_#2{\def\n@one{#1}\def\n@two{#2}\mathrel{\setw@dth{#1}{#2}
\mathop{\vcenter{\hbox to \w@dth{\rightarrowfill}\kern-1.7ex
                 \hbox to \w@dth{\leftarrowfill}}%
       }\limits
\ifx\n@one\empty\else ^{\box\z@}\fi
\ifx\n@two\empty\else _{\box\@ne}\fi}}
\def\t@fr@^#1{\@ifnextchar_ {\tofr@^{#1}}{\tofr@^{#1}_{}}}
\def\tofro{\@ifnextchar^ {\t@fr@}{\t@fr@^{}}}
\def\mon{\mathop{\m@th\hbox to
      14.6\P@{\lasyb\char'51\hskip-2.1\P@$\arrext$\hss
$\mathord\rightarrow$}}\limits} 
\def\leftmono{\mathrel{\m@th\hbox to
14.6\P@{$\mathord\leftarrow$\hss$\arrext$\hskip-2.1\P@\lasyb\char'50%
}}\limits} 
\mathchardef\arrext="0200       
\def\settypes(#1,#2,#3){\arrowtypea#1 \arrowtypeb#2 \arrowtypec#3}
\def\settoheight#1#2{\setbox\@tempboxa\hbox{#2}#1\ht\@tempboxa\relax}%
\def\settodepth#1#2{\setbox\@tempboxa\hbox{#2}#1\dp\@tempboxa\relax}%
\def\settokens`#1`#2`#3`#4`{%
     \def\tokena{#1}\def\tokenb{#2}\def\tokenc{#3}\def\tokend{#4}}
\def\setsqparms[#1`#2`#3`#4;#5`#6]{%
\arrowtypea #1
\arrowtypeb #2
\arrowtypec #3
\arrowtyped #4
\width #5
\height #6
}
\def\setpos(#1,#2){\xpos=#1 \ypos#2}
\def\settriparms[#1`#2`#3;#4]{\settripairparms[#1`#2`#3`1`1;#4]}%
\def\settripairparms[#1`#2`#3`#4`#5;#6]{%
\arrowtypea #1
\arrowtypeb #2
\arrowtypec #3
\arrowtyped #4
\arrowtypee #5
\width #6
\height #6
}
\def\resetparms{\settripairparms[1`1`1`1`1;500]\width 500}
\def\mvector(#1,#2)#3{
\put(0,0){\vector(#1,#2){#3}}%
\put(0,0){\vector(#1,#2){26}}%
}
\def\evector(#1,#2)#3{{
\arrowlength #3
\put(0,0){\vector(#1,#2){\arrowlength}}%
\advance \arrowlength by-30
\put(0,0){\vector(#1,#2){\arrowlength}}%
}}
\def\horsize#1#2{%
\settowidth{\tempdimen}{$#2$}%
#1=\tempdimen
\divide #1 by\unitlength
}
\def\vertsize#1#2{%
\settoheight{\tempdimen}{$#2$}%
#1=\tempdimen
\settodepth{\tempdimen}{$#2$}%
\advance #1 by\tempdimen
\divide #1 by\unitlength
}
\def\putvector(#1,#2)(#3,#4)#5#6{{%
\ifnum3<\arrowtype
\putdashvector(#1,#2)(#3,#4)#5\arrowtype
\else
\ifnum\arrowtype<-3
\putdashvector(#1,#2)(#3,#4)#5\arrowtype
\else
\xpos=#1
\ypos=#2
\run=#3
\rise=#4
\arrowlength=#5
\ifnum \arrowtype<0
    \ifnum \run=0
        \advance \ypos by-\arrowlength
    \else
        \tempcounta \arrowlength
        \multiply \tempcounta by\rise
        \divide \tempcounta by\run
        \ifnum\run>0
            \advance \xpos by\arrowlength
            \advance \ypos by\tempcounta
        \else
            \advance \xpos by-\arrowlength
            \advance \ypos by-\tempcounta
        \fi
    \fi
    \multiply \arrowtype by-1
    \multiply \rise by-1
    \multiply \run by-1
\fi
\ifcase \arrowtype
\or \put(\xpos,\ypos){\vector(\run,\rise){\arrowlength}}%
\or \put(\xpos,\ypos){\mvector(\run,\rise)\arrowlength}%
\or \put(\xpos,\ypos){\evector(\run,\rise){\arrowlength}}%
\fi\fi\fi
}}
\def\putsplitvector(#1,#2)#3#4{
\xpos #1
\ypos #2
\arrowtype #4
\halflength #3
\arrowlength #3
\gap 140
\advance \halflength by-\gap
\divide \halflength by2
\ifnum\arrowtype>0
   \ifcase \arrowtype
   \or \put(\xpos,\ypos){\line(0,-1){\halflength}}%
       \advance\ypos by-\halflength
       \advance\ypos by-\gap
       \put(\xpos,\ypos){\vector(0,-1){\halflength}}%
   \or \put(\xpos,\ypos){\line(0,-1)\halflength}%
       \put(\xpos,\ypos){\vector(0,-1)3}%
       \advance\ypos by-\halflength
       \advance\ypos by-\gap
       \put(\xpos,\ypos){\vector(0,-1){\halflength}}%
   \or \put(\xpos,\ypos){\line(0,-1)\halflength}%
       \advance\ypos by-\halflength
       \advance\ypos by-\gap
       \put(\xpos,\ypos){\evector(0,-1){\halflength}}%
   \fi
\else \arrowtype=-\arrowtype
   \ifcase\arrowtype
   \or \advance \ypos by-\arrowlength
       \put(\xpos,\ypos){\line(0,1){\halflength}}%
       \advance\ypos by\halflength
       \advance\ypos by\gap
       \put(\xpos,\ypos){\vector(0,1){\halflength}}%
   \or \advance \ypos by-\arrowlength
       \put(\xpos,\ypos){\line(0,1)\halflength}%
       \put(\xpos,\ypos){\vector(0,1)3}%
       \advance\ypos by\halflength
       \advance\ypos by\gap
       \put(\xpos,\ypos){\vector(0,1){\halflength}}%
   \or \advance \ypos by-\arrowlength
       \put(\xpos,\ypos){\line(0,1)\halflength}%
       \advance\ypos by\halflength
       \advance\ypos by\gap
       \put(\xpos,\ypos){\evector(0,1){\halflength}}%
   \fi
\fi
}
\def\putmorphism(#1)(#2,#3)[#4`#5`#6]#7#8#9{{%
\run #2
\rise #3
\ifnum\rise=0
  \puthmorphism(#1)[#4`#5`#6]{#7}{#8}#9%
\else\ifnum\run=0
  \putvmorphism(#1)[#4`#5`#6]{#7}{#8}#9%
\else
\setpos(#1)%
\arrowlength #7
\arrowtype #8
\ifnum\run=0
\else\ifnum\rise=0
\else
\ifnum\run>0
    \coefa=1
\else
   \coefa=-1
\fi
\ifnum\arrowtype>0
   \coefb=0
   \coefc=-1
\else
   \coefb=\coefa
   \coefc=1
   \arrowtype=-\arrowtype
\fi
\width=2
\multiply \width by\run
\divide \width by\rise
\ifnum \width<0  \width=-\width\fi
\advance\width by60
\if l#9 \width=-\width\fi
\putbox(\xpos,\ypos){#4}
{\multiply \coefa by\arrowlength
\advance\xpos by\coefa
\multiply \coefa by\rise
\divide \coefa by\run
\advance \ypos by\coefa
\putbox(\xpos,\ypos){#5} }%
{\multiply \coefa by\arrowlength
\divide \coefa by2
\advance \xpos by\coefa
\advance \xpos by\width
\multiply \coefa by\rise
\divide \coefa by\run
\advance \ypos by\coefa
\if l#9%
   \putrbox(\xpos,\ypos){#6}%
\else\if r#9%
   \putlbox(\xpos,\ypos){#6}%
\fi\fi }%
{\multiply \rise by-\coefc
\multiply \run by-\coefc
\multiply \coefb by\arrowlength
\advance \xpos by\coefb
\multiply \coefb by\rise
\divide \coefb by\run
\advance \ypos by\coefb
\multiply \coefc by70
\advance \ypos by\coefc
\multiply \coefc by\run
\divide \coefc by\rise
\advance \xpos by\coefc
\multiply \coefa by140
\multiply \coefa by\run
\divide \coefa by\rise
\advance \arrowlength by\coefa
\ifcase\arrowtype
\or \put(\xpos,\ypos){\vector(\run,\rise){\arrowlength}}%
\or \put(\xpos,\ypos){\mvector(\run,\rise){\arrowlength}}%
\or \put(\xpos,\ypos){\evector(\run,\rise){\arrowlength}}%
\fi}\fi\fi\fi\fi}}
\def\howmanydashes{
\numbdashes=\arrowlength \lengthdash=40
\divide\numbdashes by \lengthdash
\lengthdash=\arrowlength
\divide\lengthdash by \numbdashes
\increment=\lengthdash
\multiply\lengthdash by 3
\divide\lengthdash by 5
}
\def\putdashvector(#1)(#2,#3)#4#5{%
\ifnum#3=0 \putdashhvector(#1){#4}#5
\else
\ifnum#2=0
\putdashvvector(#1){#4}#5\fi\fi}
\def\putdashhvector(#1,#2)#3#4{{%
\arrowlength=#3 \howmanydashes
\multiput(#1,#2)(\increment,0){\numbdashes}%
{\vrule height .4pt width \lengthdash\unitlength}
\arrowtype=#4 \xpos=#1
\ifnum\arrowtype<0 \advance\arrowtype by 7 \fi
\ifcase\arrowtype
\or \advance\xpos by 10
    \put(\xpos,#2){\vector(-1,0){\lengthdash}}
    \advance\xpos by 40
    \put(\xpos,#2){\vector(-1,0){\lengthdash}}
\or \advance \xpos by 10
    \put(\xpos,#2){\vector(-1,0){\lengthdash}}
    \advance\xpos by  \arrowlength
    \advance\xpos by  -50
    \put(\xpos,#2){\vector(-1,0){\lengthdash}}
\or \advance\xpos by 10
    \put(\xpos,#2){\vector(-1,0){\lengthdash}}
\or \advance\xpos by \arrowlength
    \advance\xpos by -\lengthdash
    \put(\xpos,#2){\vector(1,0){\lengthdash}}
\or {\advance\xpos by 10
    \put(\xpos,#2){\vector(1,0){\lengthdash}}}
    \advance\xpos by \arrowlength
    \advance\xpos by -\lengthdash
    \put(\xpos,#2){\vector(1,0){\lengthdash}}
\or \advance\xpos by \arrowlength
    \advance\xpos by -\lengthdash
    \put(\xpos,#2){\vector(1,0){\lengthdash}}
    \advance\xpos by -40
    \put(\xpos,#2){\vector(1,0){\lengthdash}}
   \fi
}}
\def\putdashvvector(#1,#2)#3#4{{%
\arrowlength=#3 \howmanydashes
\ypos=#2 \advance\ypos by -\arrowlength
\multiput(#1,#2)(0,\increment){\numbdashes}%
    {\vrule width .4pt height \lengthdash\unitlength}
\arrowtype=#4 \ypos=#2
\ifnum\arrowtype<0 \advance\arrowtype by 7 \fi
\ifcase\arrowtype
\or \advance\ypos by \arrowlength \advance\ypos by -40
    \put(#1,\ypos){\vector(0,1){\lengthdash}}
    \advance\ypos by -40
    \put(#1,\ypos){\vector(0,1){\lengthdash}}
\or \advance\ypos by 10
    \put(#1,\ypos){\vector(0,1){\lengthdash}}
    \advance\ypos by \arrowlength \advance\ypos by -40
    \put(#1,\ypos){\vector(0,1){\lengthdash}}
\or \advance\ypos by \arrowlength \advance\ypos by -40
    \put(#1,\ypos){\vector(0,1){\lengthdash}}
\or \advance\ypos by 10
    \put(#1,\ypos){\vector(0,-1){\lengthdash}}
\or \advance\ypos by 10
    \put(#1,\ypos){\vector(0,-1){\lengthdash}}
    \advance\ypos by \arrowlength \advance\ypos by -40
    \put(#1,\ypos){\vector(0,-1){\lengthdash}}
\or \advance\ypos by 10
    \put(#1,\ypos){\vector(0,-1){\lengthdash}}
    \advance\ypos by 40
    \put(#1,\ypos){\vector(0,-1){\lengthdash}}
\fi
}}
\def\puthmorphism(#1,#2)[#3`#4`#5]#6#7#8{{%
\xpos #1
\ypos #2
\width #6
\arrowlength #6
\arrowtype=#7
\putbox(\xpos,\ypos){#3\vphantom{#4}}%
{\advance \xpos by\arrowlength
\putbox(\xpos,\ypos){\vphantom{#3}#4}}%
\horsize{\tempcounta}{#3}%
\horsize{\tempcountb}{#4}%
\divide \tempcounta by2
\divide \tempcountb by2
\advance \tempcounta by30
\advance \tempcountb by30
\advance \xpos by\tempcounta
\advance \arrowlength by-\tempcounta
\advance \arrowlength by-\tempcountb
\putvector(\xpos,\ypos)(1,0)\arrowlength\arrowtype
\divide \arrowlength by2
\advance \xpos by\arrowlength
\vertsize{\tempcounta}{#5}%
\divide\tempcounta by2
\advance \tempcounta by20
\if a#8 %
   \advance \ypos by\tempcounta
   \putbox(\xpos,\ypos){#5}%
\else
   \advance \ypos by-\tempcounta
   \putbox(\xpos,\ypos){#5}%
\fi}}
\def\putvmorphism(#1,#2)[#3`#4`#5]#6#7#8{{%
\xpos #1
\ypos #2
\arrowlength #6
\arrowtype #7
\settowidth{\xlen}{$#5$}%
\putbox(\xpos,\ypos){#3}%
{\advance \ypos by-\arrowlength
\putbox(\xpos,\ypos){#4}}%
{\advance\arrowlength by-140
\advance \ypos by-70
\ifdim\xlen>0pt
   \if m#8%
      \putsplitvector(\xpos,\ypos)\arrowlength\arrowtype
   \else
   \putvector(\xpos,\ypos)(0,-1)\arrowlength\arrowtype
   \fi
\else
   \putvector(\xpos,\ypos)(0,-1)\arrowlength\arrowtype
\fi}%
\ifdim\xlen>0pt
   \divide \arrowlength by2
   \advance\ypos by-\arrowlength
   \if l#8%
      \advance \xpos by-40
      \putrbox(\xpos,\ypos){#5}%
   \else\if r#8%
      \advance \xpos by40
      \putlbox(\xpos,\ypos){#5}%
   \else
      \putbox(\xpos,\ypos){#5}%
   \fi\fi
\fi
}}
\def\putsquarep<#1>(#2)[#3;#4`#5`#6`#7]{{%
\setsqparms[#1]%
\setpos(#2)%
\settokens`#3`%
\puthmorphism(\xpos,\ypos)[\tokenc`\tokend`{#7}]{\width}{\arrowtyped}b%
\advance\ypos by \height
\puthmorphism(\xpos,\ypos)[\tokena`\tokenb`{#4}]{\width}{\arrowtypea}a%
\putvmorphism(\xpos,\ypos)[``{#5}]{\height}{\arrowtypeb}l%
\advance\xpos by \width
\putvmorphism(\xpos,\ypos)[``{#6}]{\height}{\arrowtypec}r%
}}
\def\putsquare{\@ifnextchar <{\putsquarep}{\putsquarep%
   <\arrowtypea`\arrowtypeb`\arrowtypec`\arrowtyped;\width`\height>}}
\def\square{\@ifnextchar< {\squarep}{\squarep
   <\arrowtypea`\arrowtypeb`\arrowtypec`\arrowtyped;\width`\height>}}
\def\squarep<#1>[#2`#3`#4`#5;#6`#7`#8`#9]{{
\setsqparms[#1]
\diagram
\putsquarep<\arrowtypea`\arrowtypeb`\arrowtypec`
\arrowtyped;\width`\height>
(0,0)[#2`#3`#4`{#5};#6`#7`#8`{#9}]
\enddiagram
}}                                                 
\def\putptrianglep<#1>(#2,#3)[#4`#5`#6;#7`#8`#9]{{%
\settriparms[#1]%
\xpos=#2 \ypos=#3
\advance\ypos by \height
\puthmorphism(\xpos,\ypos)[#4`#5`{#7}]{\height}{\arrowtypea}a%
\putvmorphism(\xpos,\ypos)[`#6`{#8}]{\height}{\arrowtypeb}l%
\advance\xpos by\height
\putmorphism(\xpos,\ypos)(-1,-1)[``{#9}]{\height}{\arrowtypec}r%
}}
\def\putptriangle{\@ifnextchar <{\putptrianglep}{\putptrianglep
   <\arrowtypea`\arrowtypeb`\arrowtypec;\height>}}
\def\ptriangle{\@ifnextchar <{\ptrianglep}{\ptrianglep
   <\arrowtypea`\arrowtypeb`\arrowtypec;\height>}}
\def\ptrianglep<#1>[#2`#3`#4;#5`#6`#7]{{
\settriparms[#1]
\diagram
\putptrianglep<\arrowtypea`\arrowtypeb`
\arrowtypec;\height>
(0,0)[#2`#3`#4;#5`#6`{#7}]
\enddiagram
}}                                            
\def\putqtrianglep<#1>(#2,#3)[#4`#5`#6;#7`#8`#9]{{%
\settriparms[#1]%
\xpos=#2 \ypos=#3
\advance\ypos by\height
\puthmorphism(\xpos,\ypos)[#4`#5`{#7}]{\height}{\arrowtypea}a%
\putmorphism(\xpos,\ypos)(1,-1)[``{#8}]{\height}{\arrowtypeb}l%
\advance\xpos by\height
\putvmorphism(\xpos,\ypos)[`#6`{#9}]{\height}{\arrowtypec}r%
}}
\def\putqtriangle{\@ifnextchar <{\putqtrianglep}{\putqtrianglep
   <\arrowtypea`\arrowtypeb`\arrowtypec;\height>}}
\def\qtriangle{\@ifnextchar <{\qtrianglep}{\qtrianglep
   <\arrowtypea`\arrowtypeb`\arrowtypec;\height>}}
\def\qtrianglep<#1>[#2`#3`#4;#5`#6`#7]{{
\settriparms[#1]
\width=\height                                
\diagram
\putqtrianglep<\arrowtypea`\arrowtypeb`
\arrowtypec;\height>
(0,0)[#2`#3`#4;#5`#6`{#7}]
\enddiagram
}}
\def\putdtrianglep<#1>(#2,#3)[#4`#5`#6;#7`#8`#9]{{%
\settriparms[#1]%
\xpos=#2 \ypos=#3
\puthmorphism(\xpos,\ypos)[#5`#6`{#9}]{\height}{\arrowtypec}b%
\advance\xpos by \height \advance\ypos by\height
\putmorphism(\xpos,\ypos)(-1,-1)[``{#7}]{\height}{\arrowtypea}l%
\putvmorphism(\xpos,\ypos)[#4``{#8}]{\height}{\arrowtypeb}r%
}}
\def\putdtriangle{\@ifnextchar <{\putdtrianglep}{\putdtrianglep
   <\arrowtypea`\arrowtypeb`\arrowtypec;\height>}}
\def\dtriangle{\@ifnextchar <{\dtrianglep}{\dtrianglep
   <\arrowtypea`\arrowtypeb`\arrowtypec;\height>}}
\def\dtrianglep<#1>[#2`#3`#4;#5`#6`#7]{{
\settriparms[#1]
\width=\height                                
\diagram
\putdtrianglep<\arrowtypea`\arrowtypeb`
\arrowtypec;\height>
(0,0)[#2`#3`#4;#5`#6`{#7}]
\enddiagram
}}
\def\putbtrianglep<#1>(#2,#3)[#4`#5`#6;#7`#8`#9]{{%
\settriparms[#1]%
\xpos=#2 \ypos=#3
\puthmorphism(\xpos,\ypos)[#5`#6`{#9}]{\height}{\arrowtypec}b%
\advance\ypos by\height
\putmorphism(\xpos,\ypos)(1,-1)[``{#8}]{\height}{\arrowtypeb}r%
\putvmorphism(\xpos,\ypos)[#4``{#7}]{\height}{\arrowtypea}l%
}}
\def\putbtriangle{\@ifnextchar <{\putbtrianglep}{\putbtrianglep
   <\arrowtypea`\arrowtypeb`\arrowtypec;\height>}}
\def\btriangle{\@ifnextchar <{\btrianglep}{\btrianglep
   <\arrowtypea`\arrowtypeb`\arrowtypec;\height>}}
\def\btrianglep<#1>[#2`#3`#4;#5`#6`#7]{{
\settriparms[#1]
\width=\height                               
\diagram
\putbtrianglep<\arrowtypea`\arrowtypeb`
\arrowtypec;\height>
(0,0)[#2`#3`#4;#5`#6`{#7}]
\enddiagram
}}
\def\putAtrianglep<#1>(#2,#3)[#4`#5`#6;#7`#8`#9]{{%
\settriparms[#1]%
\xpos=#2 \ypos=#3
{\multiply \height by2
\puthmorphism(\xpos,\ypos)[#5`#6`{#9}]{\height}{\arrowtypec}b}%
\advance\xpos by\height \advance\ypos by\height
\putmorphism(\xpos,\ypos)(-1,-1)[#4``{#7}]{\height}{\arrowtypea}l%
\putmorphism(\xpos,\ypos)(1,-1)[``{#8}]{\height}{\arrowtypeb}r%
}}
\def\putAtriangle{\@ifnextchar <{\putAtrianglep}{\putAtrianglep
   <\arrowtypea`\arrowtypeb`\arrowtypec;\height>}}
\def\Atriangle{\@ifnextchar <{\Atrianglep}{\Atrianglep
   <\arrowtypea`\arrowtypeb`\arrowtypec;\height>}}
\def\Atrianglep<#1>[#2`#3`#4;#5`#6`#7]{{
\settriparms[#1]
\width=\height                                     
\diagram
\putAtrianglep<\arrowtypea`\arrowtypeb`
\arrowtypec;\height>
(0,0)[#2`#3`#4;#5`#6`{#7}]
\enddiagram
}}
\def\putAtrianglepairp<#1>(#2)[#3;#4`#5`#6`#7`#8]{{%
\settripairparms[#1]%
\setpos(#2)%
\settokens`#3`%
\puthmorphism(\xpos,\ypos)[\tokenb`\tokenc`{#7}]{\height}{\arrowtyped}b%
\advance\xpos by\height
\puthmorphism(\xpos,\ypos)[\phantom{\tokenc}`\tokend`{#8}]%
{\height}{\arrowtypee}b%
\advance\ypos by\height
\putmorphism(\xpos,\ypos)(-1,-1)[\tokena``{#4}]{\height}{\arrowtypea}l%
\putvmorphism(\xpos,\ypos)[``{#5}]{\height}{\arrowtypeb}m%
\putmorphism(\xpos,\ypos)(1,-1)[``{#6}]{\height}{\arrowtypec}r%
}}
\def\putAtrianglepair{\@ifnextchar <{\putAtrianglepairp}{\putAtrianglepairp%
   <\arrowtypea`\arrowtypeb`\arrowtypec`\arrowtyped`\arrowtypee;\height>}}
\def\Atrianglepair{\@ifnextchar <{\Atrianglepairp}{\Atrianglepairp%
   <\arrowtypea`\arrowtypeb`\arrowtypec`\arrowtyped`\arrowtypee;\height>}}
\def\Atrianglepairp<#1>[#2;#3`#4`#5`#6`#7]{{
\settripairparms[#1]
\settokens`#2`
\width=\height                                
\diagram
\putAtrianglepairp                            
<\arrowtypea`\arrowtypeb`\arrowtypec`
\arrowtyped`\arrowtypee;\height>
(0,0)[{#2};#3`#4`#5`#6`{#7}]
\enddiagram
}}
\def\putVtrianglep<#1>(#2,#3)[#4`#5`#6;#7`#8`#9]{{%
\settriparms[#1]%
\xpos=#2 \ypos=#3
\advance\ypos by\height
{\multiply\height by2
\puthmorphism(\xpos,\ypos)[#4`#5`{#7}]{\height}{\arrowtypea}a}%
\putmorphism(\xpos,\ypos)(1,-1)[`#6`{#8}]{\height}{\arrowtypeb}l%
\advance\xpos by\height
\advance\xpos by\height
\putmorphism(\xpos,\ypos)(-1,-1)[``{#9}]{\height}{\arrowtypec}r%
}}
\def\putVtriangle{\@ifnextchar <{\putVtrianglep}{\putVtrianglep
   <\arrowtypea`\arrowtypeb`\arrowtypec;\height>}}
\def\Vtriangle{\@ifnextchar <{\Vtrianglep}{\Vtrianglep
   <\arrowtypea`\arrowtypeb`\arrowtypec;\height>}}
\def\Vtrianglep<#1>[#2`#3`#4;#5`#6`#7]{{
\settriparms[#1]
\width=\height                                 
\diagram
\putVtrianglep<\arrowtypea`\arrowtypeb`
\arrowtypec;\height>
(0,0)[#2`#3`#4;#5`#6`{#7}]
\enddiagram
}}
\def\putVtrianglepairp<#1>(#2)[#3;#4`#5`#6`#7`#8]{{
\settripairparms[#1]%
\setpos(#2)%
\settokens`#3`%
\advance\ypos by\height
\putmorphism(\xpos,\ypos)(1,-1)[`\tokend`{#6}]{\height}{\arrowtypec}l%
\puthmorphism(\xpos,\ypos)[\tokena`\tokenb`{#4}]{\height}{\arrowtypea}a%
\advance\xpos by\height
\puthmorphism(\xpos,\ypos)[\phantom{\tokenb}`\tokenc`{#5}]%
{\height}{\arrowtypeb}a%
\putvmorphism(\xpos,\ypos)[``{#7}]{\height}{\arrowtyped}m%
\advance\xpos by\height
\putmorphism(\xpos,\ypos)(-1,-1)[``{#8}]{\height}{\arrowtypee}r%
}}
\def\putVtrianglepair{\@ifnextchar <{\putVtrianglepairp}{\putVtrianglepairp%
    <\arrowtypea`\arrowtypeb`\arrowtypec`\arrowtyped`\arrowtypee;\height>}}
\def\Vtrianglepair{\@ifnextchar <{\Vtrianglepairp}{\Vtrianglepairp%
    <\arrowtypea`\arrowtypeb`\arrowtypec`\arrowtyped`\arrowtypee;\height>}}
\def\Vtrianglepairp<#1>[#2;#3`#4`#5`#6`#7]{{
\settripairparms[#1]
\settokens`#2`
\diagram
\putVtrianglepairp                             
<\arrowtypea`\arrowtypeb`\arrowtypec`
\arrowtyped`\arrowtypee;\height>
(0,0)[{#2};#3`#4`#5`#6`{#7}]
\enddiagram
}}
\def\putCtrianglep<#1>(#2,#3)[#4`#5`#6;#7`#8`#9]{{%
\settriparms[#1]%
\xpos=#2 \ypos=#3
\advance\ypos by\height
\putmorphism(\xpos,\ypos)(1,-1)[``{#9}]{\height}{\arrowtypec}l%
\advance\xpos by\height
\advance\ypos by\height
\putmorphism(\xpos,\ypos)(-1,-1)[#4`#5`{#7}]{\height}{\arrowtypea}l%
{\multiply\height by 2
\putvmorphism(\xpos,\ypos)[`#6`{#8}]{\height}{\arrowtypeb}r}%
}}
\def\putCtriangle{\@ifnextchar <{\putCtrianglep}{\putCtrianglep
    <\arrowtypea`\arrowtypeb`\arrowtypec;\height>}}
\def\Ctriangle{\@ifnextchar <{\Ctrianglep}{\Ctrianglep
    <\arrowtypea`\arrowtypeb`\arrowtypec;\height>}}
\def\Ctrianglep<#1>[#2`#3`#4;#5`#6`#7]{{
\settriparms[#1]
\width=\height                               
\diagram
\putCtrianglep<\arrowtypea`\arrowtypeb`
\arrowtypec;\height>
(0,0)[#2`#3`#4;#5`#6`{#7}]
\enddiagram
}}                                           
\def\putDtrianglep<#1>(#2,#3)[#4`#5`#6;#7`#8`#9]{{%
\settriparms[#1]%
\xpos=#2 \ypos=#3
\advance\xpos by\height \advance\ypos by\height
\putmorphism(\xpos,\ypos)(-1,-1)[``{#9}]{\height}{\arrowtypec}r%
\advance\xpos by-\height \advance\ypos by\height
\putmorphism(\xpos,\ypos)(1,-1)[`#5`{#8}]{\height}{\arrowtypeb}r%
{\multiply\height by 2
\putvmorphism(\xpos,\ypos)[#4`#6`{#7}]{\height}{\arrowtypea}l}%
}}
\def\putDtriangle{\@ifnextchar <{\putDtrianglep}{\putDtrianglep
    <\arrowtypea`\arrowtypeb`\arrowtypec;\height>}}
\def\Dtriangle{\@ifnextchar <{\Dtrianglep}{\Dtrianglep
   <\arrowtypea`\arrowtypeb`\arrowtypec;\height>}}
\def\Dtrianglep<#1>[#2`#3`#4;#5`#6`#7]{{
\settriparms[#1]
\width=\height                              
\diagram
\putDtrianglep<\arrowtypea`\arrowtypeb`
\arrowtypec;\height>
(0,0)[#2`#3`#4;#5`#6`{#7}]
\enddiagram
}}                                          
\def\setrecparms[#1`#2]{\width=#1 \height=#2}%
\def\recursep<#1`#2>[#3;#4`#5`#6`#7`#8]{{\m@th
\width=#1 \height=#2
\settokens`#3`
\settowidth{\tempdimen}{$\tokena$}
\ifdim\tempdimen=0pt
  \savebox{\tempboxa}{\hbox{$\tokenb$}}%
  \savebox{\tempboxb}{\hbox{$\tokend$}}%
  \savebox{\tempboxc}{\hbox{$#6$}}%
\else
  \savebox{\tempboxa}{\hbox{$\hbox{$\tokena$}\times\hbox{$\tokenb$}$}}%
  \savebox{\tempboxb}{\hbox{$\hbox{$\tokena$}\times\hbox{$\tokend$}$}}%
  \savebox{\tempboxc}{\hbox{$\hbox{$\tokena$}\times\hbox{$#6$}$}}%
\fi
\ypos=\height
\divide\ypos by 2
\xpos=\ypos
\advance\xpos by \width
\bfig
\putCtrianglep<-1`1`1;\ypos>(0,0)[`\tokenc`;#5`#6`{#7}]%
\puthmorphism(\ypos,0)[\tokend`\usebox{\tempboxb}`{#8}]{\width}{-1}b%
\puthmorphism(\ypos,\height)[\tokenb`\usebox{\tempboxa}`{#4}]{\width}{-1}a%
\advance\ypos by \width
\putvmorphism(\ypos,\height)[``\usebox{\tempboxc}]{\height}1r%
\efig
}}
\def\recurse{\@ifnextchar <{\recursep}{\recursep<\width`\height>}}
\def\puttwohmorphisms(#1,#2)[#3`#4;#5`#6]#7#8#9{{%
%
\puthmorphism(#1,#2)[#3`#4`]{#7}0a
\ypos=#2
\advance\ypos by 20
\puthmorphism(#1,\ypos)[\phantom{#3}`\phantom{#4}`#5]{#7}{#8}a
\advance\ypos by -40
\puthmorphism(#1,\ypos)[\phantom{#3}`\phantom{#4}`#6]{#7}{#9}b
}}
\def\puttwovmorphisms(#1,#2)[#3`#4;#5`#6]#7#8#9{{%
%
%
\putvmorphism(#1,#2)[#3`#4`]{#7}0a
\xpos=#1
\advance\xpos by -20
\putvmorphism(\xpos,#2)[\phantom{#3}`\phantom{#4}`#5]{#7}{#8}l
\advance\xpos by 40
\putvmorphism(\xpos,#2)[\phantom{#3}`\phantom{#4}`#6]{#7}{#9}r
}}
\def\puthcoequalizer(#1)[#2`#3`#4;#5`#6`#7]#8#9{{%
%
\setpos(#1)%
\puttwohmorphisms(\xpos,\ypos)[#2`#3;#5`#6]{#8}11%
\advance\xpos by #8
\puthmorphism(\xpos,\ypos)[\phantom{#3}`#4`#7]{#8}1{#9}
}}
\def\putvcoequalizer(#1)[#2`#3`#4;#5`#6`#7]#8#9{{%
%
%
\setpos(#1)%
\puttwovmorphisms(\xpos,\ypos)[#2`#3;#5`#6]{#8}11%
\advance\ypos by -#8
\putvmorphism(\xpos,\ypos)[\phantom{#3}`#4`#7]{#8}1{#9}
}}
\def\putthreehmorphisms(#1)[#2`#3;#4`#5`#6]#7(#8)#9{{%
\setpos(#1) \settypes(#8)
\if a#9 %
     \vertsize{\tempcounta}{#5}%
     \vertsize{\tempcountb}{#6}%
     \ifnum \tempcounta<\tempcountb \tempcounta=\tempcountb \fi
\else
     \vertsize{\tempcounta}{#4}%
     \vertsize{\tempcountb}{#5}%
     \ifnum \tempcounta<\tempcountb \tempcounta=\tempcountb \fi
\fi
\advance \tempcounta by 60
\puthmorphism(\xpos,\ypos)[#2`#3`#5]{#7}{\arrowtypeb}{#9}
\advance\ypos by \tempcounta
\puthmorphism(\xpos,\ypos)[\phantom{#2}`\phantom{#3}`#4]{#7}{\arrowtypea}{#9}
\advance\ypos by -\tempcounta \advance\ypos by -\tempcounta
\puthmorphism(\xpos,\ypos)[\phantom{#2}`\phantom{#3}`#6]{#7}{\arrowtypec}{#9}
}}
\def\setarrowtoks[#1`#2`#3`#4`#5`#6]{%
\def\toka{#1}
\def\tokb{#2}
\def\tokc{#3}
\def\tokd{#4}
\def\toke{#5}
\def\tokf{#6}
}
\def\hex{\@ifnextchar <{\hexp}{\hexp<1000`400>}}
\def\hexp<#1`#2>[#3`#4`#5`#6`#7`#8;#9]{%
\setarrowtoks[#9]
\yext=#2 \advance \yext by #2
\xext=#1 \advance\xext by \yext
\bfig
\putCtriangle<-1`0`1;#2>(0,0)[`#5`;\tokb``\tokd]
\xext=#1 \yext=#2 \advance \yext by #2
\putsquare<1`0`0`1;\xext`\yext>(#2,0)[#3`#4`#7`#8;\toka```\tokf]
\advance \xext by #2
\putDtriangle<0`1`-1;#2>(\xext,0)[`#6`;`\tokc`\toke]
\efig
}
\begin{document}

\title{Topos theory and `neo-realist'\\quantum theory}
\author{Andreas D\"{o}ring\footnote{a.doering@imperial.ac.uk}\\ \normalsize Theoretical Physics Group, Blackett Laboratory\\ \normalsize Imperial College, London}
\date{December 24, 2007}
\maketitle

\begin{abstract}
Topos theory, a branch of category theory, has been proposed as
mathematical basis for the formulation of physical theories. In this article,
we give a brief introduction to this approach, emphasising the logical
aspects. Each topos serves as a `mathematical universe' with an internal
logic, which is used to assign truth-values to all propositions about a
physical system. We show in detail how this works for (algebraic) quantum theory.
\end{abstract}

\bigskip

\begin{flushright}
	\textit{``The problem is all inside your head'', she said to me\\
	the answer is easy if you take it logically}\\
	\medskip
	Paul Simon (from \textit{`50 Ways To Leave Your Lover'})
\end{flushright}

\bigskip

\section{Introduction}

The use of topos theory in the foundations of physics and, in particular, the
foundations of quantum theory was suggested by Chris Isham more than 10 years
ago in \cite{Ish97}. Subsequently, these ideas were developed\ in an
application to the Kochen-Specker theorem (with Jeremy Butterfield,
\cite{IB98,IB99,IBH00,IB02}, for conceptual considerations see \cite{IB00}).
In these papers, the use of a multi-valued, contextual logic for quantum
theory was proposed. This logic is given by the internal logic of a certain
topos of presheaves over a category of contexts. Here, contexts typically are
abelian parts of a larger, non-abelian structure. There are several possible
choices for the context category. We will concentrate on algebraic quantum
theory and use the category $\mathcal{V(R)}$ of unital abelian von Neumann
subalgebras of the non-abelian von Neumann algebra of observables $\mathcal{R}$ 
of the quantum system, as first suggested in \cite{IBH00}.

\ 

The use of presheaves over such a category of contexts is motivated by the
very natural construction of the spectral presheaf $\underline{\Sigma}$, which
collects all the Gel'fand spectra of the abelian subalgebras $V\in
\mathcal{V(R)}$ into one larger structure. The Gel'fand spectra can be seen as
`local state spaces', and the spectral presheaf serves as a state space
analogue for quantum theory. Interestingly, as Isham and Butterfield showed,
this presheaf is not like a space: it has no points (in a
category-theoretical sense), and this fact is exactly equivalent to the
Kochen-Specker theorem.

\ 

The topos approach was developed considerably in the series of papers
\cite{DI07a,DI07b,DI07c,DI07d} by Chris Isham and the author. In these papers,
it was shown how topos theory can serve as a new mathematical framework for
the formulation of physical theories. The basic idea of the topos
programme is that by representing the relevant physical structures (states,
physical quantities and propositions about physical quantities) in suitable
topoi, one can achieve a remarkable structural similarity between classical
and quantum physics. Moreover, the topos programme is general enough to allow
for major generalisations. Theories beyond classical and quantum theory are
conceivable. Arguably, this generality will be needed in a future theory of
quantum gravity, which is expected to go well beyond our conventional theories.

\ 

In this paper, we will concentrate on algebraic quantum theory. We briefly
motivate the mathematical constructions and give the main
definitions.\footnote{We suppose that the reader is familiar with the
definitions of a category, functor and natural transformation.} Throughout, we
concentrate on the logical aspects of the theory. We will show in detail how,
given a state, truth-values are assigned to all propositions about a quantum
system. This is independent of any measurement or observer. For that reason,
we say that the topos approach gives a `neo-realist' formulation of quantum theory.

\ 

\subsection{What is a topos?}

It is impossible to give even the briefest introduction to topos theory here.
At the danger of being highly imprecise, we restrict ourselves to mentioning
some aspects of this well-developed mathematical theory and give a number of 
pointers to the literature. The aim merely is to give a very rough idea of 
the structure and internal logic of a topos. In the next subsection, we 
argue that this mathematical structure may be useful in physics.

\ 

There are a number of excellent textbooks on topos theory, and the reader
should consult at least one of them. We found the following books useful:
\cite{LR03,Gol84,MM92,Joh02a,Joh02b,Bell88,LS86}.

\ 

\textbf{Topoi as mathematical universes.} Every (elementary) topos
$\mathcal{E}$ can be seen as a \emph{mathematical universe}.\ As a category, a
topos $\mathcal{E}$ possesses a number of structures that generalise
constructions that are possible in the category $\operatorname*{\mathbf{Set}}$ of sets
and functions.\footnote{More precisely, \emph{small} sets and functions
between them. Small means that we do not have proper classes. One must take
care in these foundational issues to avoid problems like Russell's paradox.}
Namely, in $\operatorname*{\mathbf{Set}}$, we can construct new sets from given ones in
several ways: let $S,T$ be two sets, then we can form the cartesian product
$S\times T$, the disjoint union $S\amalg T$ and the exponential $S^{T}$, the
set of all functions from $T$ to $S$. These constructions turn out to be
fundamental and can all be phrased in an abstract, categorical manner, where
they are called finite limits, colimits and exponentials, respectively. By
definition, a topos $\mathcal{E}$ has all of these. One consequence of the existence 
of finite limits is that each topos has a \emph{terminal object}, denoted by $1$. 
This is characterised by the property that for any object $A$ in the topos 
$\mathcal{E}$, there exists exactly one arrow from $A$ to $1$. In 
$\operatorname*{\mathbf{Set}}$, a one-element set $1=\{*\}$ is terminal.\footnote{Like 
many categorical constructions, the terminal object is fixed only up to 
isomorphism: any two one-element sets are isomorphic, and any of them can serve as 
a terminal object. Nonetheless, one speaks of \emph{the} terminal object.}

\ 

Of course,
$\operatorname*{\mathbf{Set}}$ is a topos, too, and it is precisely the topos which
usually plays the r\^{o}le of our mathematical universe, since we construct
our mathematical objects starting from sets and functions between them. As a
slogan, we have: a topos $\mathcal{E}$ is a category similar to
$\operatorname*{\mathbf{Set}}$. A very nice and gentle introduction to these aspects of
topos theory is the book \cite{LR03}. Other good sources are
\cite{Gol84,McL71}.

\ 

In order to `do mathematics', one must also have a logic, including a
deductive system. Each topos comes equipped with an \emph{internal logic},
which is of \emph{intuitionistic} type. We very briefly sketch the main
characteristics of intuitionistic logic and the mathematical structures in a
topos that realise this logic.

\ 

\textbf{Intuitionistic logic.} Intuitionistic logic is similar to Boolean logic, 
the main difference being that the \emph{law of excluded middle} need not hold. 
In intuitionistic logic, there is \emph{no} axiom%
\begin{equation}
\vdash a\vee\lnot a \tag{$\ast$}%
\end{equation}
like in Boolean logic. Here, $\lnot a$ is the negation of the formula (or
proposition) $a$. The algebraic structures representing intuitionistic logic
are \emph{Heyting algebras}. A Heyting algebra is a pseudocomplemented, 
distributive lattice\footnote{Lattice is meant in the algebraic sense: a 
partially ordered set $L$ such that any two elements $a,b\in L$ have a minimum 
(greatest lower bound) $a\wedge b$ and a maximum (least upper bound) $a\vee b$ 
in $L$. A lattice $L$ is distributive if and only if $a\vee(b\wedge c)=(a\vee
b)\wedge(a\vee c)$ as well as $a\wedge(b\vee c)=(a\wedge b)\vee(a\wedge c)$
hold for all $a,b,c\in L$.} with zero element $0$ and unit element $1$,
representing `totally false' resp. `totally true'. The pseudocomplement is
denoted by $\lnot$, and one has, for all elements $\alpha$ of a Heyting
algebra $H$,%
\[
\alpha\vee\lnot\alpha\leq1,
\]
in contrast to $\alpha\vee\lnot\alpha=1$ in a Boolean algebra. This means that
the disjunction (\textquotedblleft Or\textquotedblright) of a proposition
$\alpha$ and its negation need not be (totally) true in a Heyting algebra.
Equivalently, one has%
\[
\lnot\lnot\alpha\geq\alpha,
\]
in contrast to $\lnot\lnot\alpha=\alpha$ in Boolean algebras.

\ 

Obviously, Boolean logic is a special case of intuitionistic logic. It is
known from Stone's theorem \cite{Sto36} that each Boolean algebra is
isomorphic to an algebra of (clopen, i.e., closed and open) subsets of a
suitable (topological) space.

\  
Let $X$ be a set, and let $P(X)$ be the power 
set of $X$, that is, the set of subsets of $X$. Given a subset $S\in P(X)$, 
one can ask for each point $x\in X$ whether it lies in $S$ or not. This can be
expressed by the \emph{characteristic function} $\chi_{S}:X\rightarrow
\{0,1\}$, which is defined as%
\[
\chi_{S}(x):=\left\{
\begin{tabular}
[c]{ll}%
$1$ & if $x\in S$\\
$0$ & if $x\notin S$%
\end{tabular}
\ \right.
\]
for all $x\in X$. The two-element set $\{0,1\}$ plays the r\^{o}le of a set of
\emph{truth-values} for propositions (of the form \textquotedblleft$x\in
S$\textquotedblright). Clearly, $1$ corresponds to `true', $0$ corresponds to
`false', and there are no other possibilities. This is an argument about sets,
so it takes place in and uses the logic of the topos $\operatorname*{\mathbf{Set}}$ of
sets and functions. $\operatorname*{\mathbf{Set}}$ is a \emph{Boolean topos}, in which
the familiar two-valued logic and the axiom ($\ast$) hold. (This does not
contradict the fact that the internal logic of topoi is intuitionistic, since
Boolean logic is a special case of intuitionistic logic.)

\ 

In an arbitrary topos, there is a special object $\Omega$, called the
\emph{subobject classifier}, that takes the r\^{o}le of the set $\{0,1\}\simeq
\{$false,true$\}$ of truth-values. Let $B$ be an object in the topos, and let
$A$ be a subobject of $B$. This means that there is a monic $A\rightarrow
B$,\footnote{A \emph{monic} is the categorical version of an injective
function. In the topos $\operatorname*{\mathbf{Set}}$, monics exactly are injective
functions.} generalising the inclusion of a subset $S$ into a larger set $X$.
Like in $\operatorname*{\mathbf{Set}}$, we can also characterise $A$ as a subobject of
$B$ by an arrow from $B$ to the subobject classifier $\Omega$. (In
$\operatorname*{\mathbf{Set}}$, this arrow is the characteristic function $\chi
_{S}:X\rightarrow\{0,1\}$.) Intuitively, this characteristic arrow from $B$ to
$\Omega$ tells us how $A$ `lies in' $B$. The textbook definition is:\\

\begin{definition}
In a category $\mathcal{C}$ with finite limits, a \textbf{subobject
classifier} is an object $\Omega$, together with a monic $\operatorname*{true}%
:1\rightarrow\Omega$, such that to every monic $m:A\rightarrow B$ in
$\mathcal{C}$ there is a unique arrow $\chi$ which, with the given monic,
forms a pullback square\\
\begin{center}
\setsqparms[1`2`2`1;700`700]
\square[A`1`B`\Omega;`m`\operatorname{true}`\chi]
\end{center}
\end{definition}

\  

In $\operatorname*{\mathbf{Set}}$, the arrow $\operatorname{true}:1\rightarrow\{0,1\}$ 
is given by $\operatorname{true}(*)=1$. In general, the subobject classifier $\Omega$ 
need not be a set, since it is an object in the topos $\mathcal{E}$, and the objects of 
$\mathcal{E}$ need not be sets. Nonetheless, there is an abstract notion of 
\emph{elements} (or \emph{points}) in category theory that we can use. The elements of 
$\Omega$ are the truth-values available in the internal logic of our topos $\mathcal{E}$, 
just like `false' and `true', the elements of $\{$false,true$\}$, are the
truth-values available in the topos $\operatorname*{\mathbf{Set}}$.

\ 

To understand the abstract notion of elements, let us consider sets for a
moment. Let $1=\{\ast\}$ be a one-element set, the terminal object in 
$\operatorname*{\mathbf{Set}}$. Let $S$ be a set and consider
an arrow $e$ from $1$ to $S$. Clearly, $e(\ast)\in S$ is one element of $S$.
The set of all functions from $1$ to $S$ corresponds exactly to the elements
of $S$. This idea can be generalised to other categories: if there is a
terminal object $1$, then we consider arrows from $1$ to an object $A$ in the 
category as \emph{elements of }$A$. For example, in the definition of the 
subobject classifier the arrow $\operatorname*{true}:1\rightarrow\Omega$ is an 
element of $\Omega$. It may happen that an object $A$ has no elements, i.e., 
there are no arrows $1\rightarrow A$. It is common to consider arrows from 
subobjects $U$ of $A$ to $A$ as \emph{generalised elements}, but we will not 
need this except briefly in subsection \ref{__GenElsAsGenStates}.

\ 

As mentioned, the elements of the subobject classifier, understood as the
arrows $1\rightarrow\Omega$, are the truth-values. Moreover, the set of these
arrows forms a Heyting algebra (see, for example, section 8.3 in
\cite{Gol84}). This is how (the algebraic representation of) intuitionistic
logic manifests itself in a topos. Another, closely related fact is that the
subobjects of any object $A$ in a topos form a Heyting algebra.\\

\subsection{ Topos theory and physics\label{__ToposThAndPhysics}}

A large part of the work on topos theory in physics consists in showing how
states, physical quantities and propositions about physical quantities can be
represented within a suitable topos attached to the system
\cite{DI07a,DI07b,DI07c,DI07d}. The choice of topos will depend on the theory
type (classical, quantum or, in future developments, even something completely
new). Let us consider classical physics for the moment to motivate this.

\ 

\textbf{Realism in classical physics.} In classical physics, one has a space
of states $\mathcal{S}$, and physical quantities $A$ are represented by
real-valued functions $f_{A}:\mathcal{S}\rightarrow\mathbb{R}$.\footnote{We
assume that $f_{A}$ is at least measurable.} A proposition about a physical
quantity $A$ is of the form \textquotedblleft$A\in\Delta$\textquotedblright,
which means \textquotedblleft the physical quantity $A$ has a value in the
(Borel) set $\Delta$\textquotedblright. This proposition is represented by the
inverse image $f_{A}^{-1}(\Delta)\subseteq\mathcal{S}$. In general,
propositions about the physical system correspond to Borel subsets of the
state space $\mathcal{S}$. If we have two propositions \textquotedblleft%
$A\in\Delta_{1}$\textquotedblright, \textquotedblleft$B\in\Delta_{2}%
$\textquotedblright\ and the corresponding subsets $f_{A}^{-1}(\Delta_{1})$,
$f_{B}^{-1}(\Delta_{2})$, then the intersection $f_{A}^{-1}(\Delta_{1})\cap
f_{B}^{-1}(\Delta_{2})$ corresponds to the proposition \textquotedblleft%
$A\in\Delta_{1}$ and $B\in\Delta_{2}$\textquotedblright, the union $f_{A}%
^{-1}(\Delta_{1})\cup f_{B}^{-1}(\Delta_{2})$ corresponds to \textquotedblleft%
$A\in\Delta_{1}$ or $B\in\Delta_{2}$\textquotedblright, and the complement
$\mathcal{S}\backslash f_{A}^{-1}(\Delta_{1})$ corresponds to the negation
\textquotedblleft$A\notin\Delta_{1}$\textquotedblright. Moreover, given a
state $s$, i.e., an element of the state space $\mathcal{S}$, each proposition
is either true or false: if $s$ lies in the subset of $\mathcal{S}$
representing the proposition, then the proposition is true, otherwise it is
false. Every physical quantity $A$ has a value in the state $s$, namely
$f_{A}(s)\in\mathbb{R}$. Thus classical physics is a \emph{realist} theory in
which propositions have truth-values independent of measurements, observers
etc. The logic is Boolean, since classical physics is based on constructions
with sets and functions, i.e., it takes place in the topos
$\operatorname*{\mathbf{Set}}$. We take this as a rule: if we want to describe a
physical system $S$ as a classical system, then the topos $\operatorname*{\mathbf{Set}}%
$ is used. This means no departure from what is ordinarily done, but it
emphasises certain structural and logical aspects of the theory.

\ 

\textbf{Instrumentalism in quantum theory.} In quantum theory, the
mathematical description is very different. Physical quantities $A$ are
represented by self-adjoint operators $\widehat{A}$ on a Hilbert space
$\mathcal{H}$. While $\mathcal{H}$ can be called a space of states, the states
$\psi\in\mathcal{H}$ play a very different r\^{o}le from those in classical
theory. In particular, a state $\psi$ does not assign values to all physical
quantities, only to those for which $\psi$ happens to be an eigenstate. The
spectral theorem shows that propositions \textquotedblleft$A\in\Delta
$\textquotedblright\ are represented by projection operators $\widehat{E}%
[A\in\Delta]$ on Hilbert space. Unless $\psi$ is an eigenstate of $A$, such a
proposition is neither true nor false (except for the trivial cases
$\widehat{E}[A\in\Delta]=\widehat{0}$, which represents trivially false
propositions, and $\widehat{E}[A\in\Delta]=\widehat{1}$, which represents
trivially true propositions). The mathematical formalism of quantum theory is
interpreted in an \emph{instrumentalist} manner: given a state $\psi$, the
proposition \textquotedblleft$A\in\Delta$\textquotedblright\ is assigned a
probability of being true, given by the expectation value $p(A\in\Delta
;\psi):=\left\langle \psi\right\vert \widehat{E}[A\in\Delta]\left\vert
\psi\right\rangle $. This means that upon measurement of the physical quantity
$A$, one will find the measurement result to lie in $\Delta$ with probability 
$p(A\in\Delta;\psi)$. This interpretation depends on measurements and an external
observer. Moreover, the measurement devices (and the observer) are described
in terms of classical physics, not quantum physics.

\ 

\textbf{The motivation from quantum gravity.} An instrumentalist
interpretation cannot describe closed quantum systems, at least there is
nothing much to be said about them from this perspective. A theory of quantum
cosmology or quantum gravity will presumably be a quantum theory of the whole
universe. Since there is no external observer who could perform measurements
in such a theory, instrumentalism becomes meaningless. One of the main
motivations for the topos programme is to overcome or circumvent the usual
instrumentalism of quantum theory and to replace it with a more realist
account of quantum systems. The idea is to use the internal logic of a topos
to assign truth-values to propositions about the system.

\ 

In order to achieve this, we will sketch a new mathematical formulation of
quantum theory that is structurally similar to classical physics. The details
can be found in \cite{DI07a,DI07b,DI07c,DI07d} and references therein.

\ 

\textbf{Plan of the paper.} The starting point is the definition of a formal
language $\mathcal{L}(S)$ attached to a physical system $S$. This is done in
section \ref{_FormalLangForPhys} and emphasises the common structure of
classical and quantum physics. In section \ref{_V(R)AndSetV(R)op}, we
introduce the topos associated to a system $S$ in the case of quantum theory,
and in section \ref{_RepresentingL(S)InSetV(R)op} we briefly discuss the
representation of $\mathcal{L}(S)$ in this topos. The representation of states
and the assignment of truth-values to propositions is treated in section
\ref{_TruthObjectsAndTruthValues}, which is the longest and most detailed
section. Section \ref{_Conclusion} concludes with some remarks on related work
and on possible generalisations.

\section{A formal language for physics\label{_FormalLangForPhys}}

There is a well-developed branch of topos theory that puts emphasis on the
logical aspects. As already mentioned, a topos can be seen as the embodiment
of (higher-order) intuitionistic logic. This point of view is expounded in
detail in Bell's book \cite{Bell88}, which is our standard reference on these
matters. Other excellent sources are \cite{LS86} and part D of \cite{Joh02b}.
The basic concept consists in defining a \emph{formal language} and then
finding a \emph{representation} of it in a suitable topos. As usual in
mathematical logic, the formal language encodes the syntactic aspects of the
theory and the representation provides the semantics. Topoi are a natural
`home'\ for the representation of formal languages encoding intuitionistic
logic, more precisely, \emph{intuitionistic, higher-order, typed predicate
logic with equality}. Typed means that there are several primitive species or
kinds of objects (instead of just sets as primitives), from which sets are
extracted as a subspecies; predicate logic means that one has quantifiers,
namely an existence quantifier $\exists$ (\textquotedblleft it
exists\textquotedblright) and a universal quantifier $\forall$
(\textquotedblleft for all\textquotedblright). Higher-order refers to the fact
that quantification can take place not only over variable individuals, but
also over subsets and functions of individuals as well as iterates of these
constructions. Bell presents a particularly elegant way to specify a formal
language with these properties. He calls this type of language a \emph{local
language}, see chapter 3 of \cite{Bell88}.

\ 

Let $S$ denote a physical system to which we attach a higher-order, typed
language $\mathcal{L}(S)$. We can only sketch the most important aspects here,
details can be found in section 4 of \cite{DI07a}. The language $\mathcal{L}%
(S)$ does not depend on the theory type (classical, quantum, ...), while its
representation of course does. The language contains at least the following
type symbols: $1,\Omega,\Sigma$ and $\mathcal{R}$. The symbol $\Sigma$ serves
as a precursor of the \emph{state object} (see below), the symbol
$\mathcal{R}$ is a precursor of the \emph{quantity-value object}, which is
where physical quantities take their values. Moreover, we require the
existence of \emph{function symbols} of the form $A:\Sigma\rightarrow
\mathcal{R}$. These are the linguistic precursors of physical quantities. For
each type, there exist \emph{variables} of that type. There are a number of
rules how to form terms and formulae (terms of type $\Omega$) from variables
of the various types, including the definition of logical connectives $\wedge$
(\textquotedblleft And\textquotedblright), $\vee$ (\textquotedblleft
Or\textquotedblright) and $\lnot$ (\textquotedblleft Not\textquotedblright).
Moreover, there are axioms giving \emph{rules of inference} that define how to
get new formulae from sets of given formulae. As an example, we mention the
\emph{cut rule}: if $\Gamma$ is a set of formulae and $\alpha$ and $\beta$ are
formulae, then we have%
\[
\frac{\Gamma:\alpha\quad\alpha,\Gamma:\beta}{\Gamma:\beta}%
\]
(here, any free variable in $\alpha$ must be free in $\Gamma$ or $\beta$). In
a representation, where the formulae aquire an interpretation and a `meaning',
this expresses that if $\Gamma$ implies $\alpha$, and $\alpha$ and $\Gamma$
together imply $\beta$, then $\Gamma$ also implies $\beta$. The axioms and
rules of inference are chosen in a way such that the logical operations
satisfy the laws of intuitionistic logic.

\ 

The formal language $\mathcal{L}(S)$ captures a number of abstract properties
of the physical system $S$. For example, if $S$ is the harmonic oscillator,
then we expect to be able to speak about the physical quantity energy in all
theory types, classical or quantum (or other). Thus, among the function
symbols $A:\Sigma\rightarrow\mathcal{R}$, there will be one symbol
$E:\Sigma\rightarrow\mathcal{R}$ which, in a representation, will become the
mathematical entity describing energy. (Which mathematical object that will be
depends on the theory type and thus on the representation.)

\ 

The representation of the language $\mathcal{L}(S)$ takes place in a suitable,
physically motivated topos $\mathcal{E}$. The type symbol $1$ is represented
by the terminal object $1$ in $\mathcal{E}$, the type symbol $\Omega$ is
represented by the subobject classifier $\Omega$. The choice of an appropriate
object $\Sigma$ in the topos that represents the symbol $\Sigma$ depends on
physical insight. The representing object $\Sigma$ is called the \emph{state
object}, and it plays the r\^{o}le of a generalised state space. What actually
is generalised is the space, not the states: $\Sigma$ is an object in a topos
$\mathcal{E}$, which need not be a topos of sets, so $\Sigma$ need not be a
set or space-like. However, as an object in a topos, $\Sigma$ does have
subobjects. These subobjects will be interpreted as (the representatives of)
propositions about the physical quantities, just like in classical physics,
where propositions correspond to subsets of state space. The propositions are
of the form \textquotedblleft$A\in\Delta$\textquotedblright, where $\Delta$
now is a subobject of the object $\mathcal{R}$ that represents the symbol
$\mathcal{R}$. The object $\mathcal{R}$ is called the \emph{quantity-value
object}, and this is where physical quantities take their values. Somewhat
surprisingly, even for ordinary quantum theory this is \emph{not} the real
number object in the topos. Finally, the function symbols $A:\Sigma
\rightarrow\mathcal{R}$ are represented by arrows betwen the objects $\Sigma$
and $\mathcal{R}$ in the topos $\mathcal{E}$.

\ 

In classical physics, the representation is the obvious one: the topos to be
used is the topos $\operatorname*{\mathbf{Set}}$ of sets and mappings, the symbol
$\Sigma$ is represented by a symplectic manifold $\mathcal{S}$ which is the
state space, the symbol $\mathcal{R}$ is represented by the real numbers and
function symbols $A:\Sigma\rightarrow\mathcal{R}$ are represented by
real-valued functions $f_{A}:\mathcal{S}\rightarrow\mathbb{R}$. Propositions
about physical quantities correspond to subsets of the state space.

\section{The context category $\mathcal{V(R)}$ and the topos of presheaves
$\operatorname*{\mathbf{Set}}^{\mathcal{V(R)}^{op}}$\label{_V(R)AndSetV(R)op}}

We will now discuss the representation of $\mathcal{L}(S)$ in the case that
$S$ is to be described as a quantum system. We assume that $S$ is a
non-trivial system that $-$in the usual description$-$ has a Hilbert space
$\mathcal{H}$ of dimension $3$ or greater, and that the physical quantities
belonging to $S$ form a von Neumann algebra $\mathcal{R}(S)\subseteq
\mathcal{B(H)}$ that contains the identity operator $\widehat{1}$.\footnote{
There should arise no confusion between the von Neumann algebra 
$\mathcal{R}=\mathcal{R}(S)$ and the symbol $\mathcal{R}$ of our formal 
language, we hope.}

\ 

From the Kochen-Specker theorem \cite{KS67} we know that there is no state
space model of quantum theory if the algebra of observables is $\mathcal{B(H)}%
$ (for the generalisation to von Neumann algebras see \cite{Doe05}).
More concretely, there is no state space $\mathcal{S}$ such that the physical
quantities are real-valued functions on $\mathcal{S}$. The reason is that if
there existed such a state space $\mathcal{S}$, then each point (i.e., state)
$s\in\mathcal{S}$ would allow to assign values to all physical quantities at
once, simply by evaluating the functions representing the physical quantities
at $s$. One can show that under very mild and natural conditions, this leads
to a mathematical contradiction.

\ 

For an \emph{abelian} von Neumann algebra $V$, there is no such obstacle: the
\emph{Gel'fand spectrum} $\Sigma_{V}$ of $V$ can be interpreted as a state space, and 
the Gel'fand transforms $\overline{A}$ of self-adjoint operators $\widehat{A}\in
V$, representing physical quantities, are real-valued functions on 
$\Sigma_{V}$. The Gel'fand spectrum
$\Sigma_{V}$ of an abelian von Neumann algebra $V$ consists of the pure states
$\lambda$ on $V$ (see e.g. \cite{KR83}). Each $\lambda\in\Sigma_{V}$ also is a multiplicative state;
for all $\widehat{A},\widehat{B}\in V$, we have%
\[
\lambda(\widehat{A}\widehat{B})=\lambda(\widehat{A})\lambda(\widehat{B}),
\]
which, for projections $\widehat{P}\in\mathcal{P}(V)$, implies
\[
\lambda(\widehat{P})=\lambda(\widehat{P}^{2})=\lambda(\widehat{P}%
)\lambda(\widehat{P})\in\{0,1\}.
\]
Finally, each $\lambda\in\Sigma_{V}$ is an algebra homomorphism from $V$ to
$\mathbb{C}$. The Gel'fand spectrum $\Sigma_{V}$ is equipped with the weak* 
topology and thus becomes a compact Hausdorff space.

\ 

Let $\widehat{A}\in V$ and define%
\begin{align*}
\overline{A}:\Sigma_{V}  & \longrightarrow\mathbb{C}\\
\lambda & \longmapsto\overline{A}(\lambda):=\lambda(\widehat{A}).
\end{align*}
The function $\overline{A}$ is called the \emph{Gel'fand transform} of
$\widehat{A}$. It is a continuous function such that
$\operatorname*{im}\overline{A}=\operatorname*{sp}\widehat{A}$. In particular,
if $\widehat{A}$ is self-adjoint, then $\lambda(\widehat{A})\in
\operatorname*{sp}\widehat{A}\subset\mathbb{R}$. The mapping
\begin{align*}
V  & \longrightarrow C(\Sigma_{V})\\
\widehat{A}  & \longmapsto\overline{A}%
\end{align*}
is called the \emph{Gel'fand transformation} on $V$. It is an isometric 
$\ast$-isomorphism between $V$ and $C(\Sigma_{V})$.\footnote{Of course, all this 
holds more generally for abelian $C^{\ast}$-algebras. We concentrate on von Neumann 
algebras, since we need these in our application.}

\ 

This leads to the idea of considering the set $\mathcal{V(R)}$ of
non-trivial unital abelian von Neumann subalgebras of $\mathcal{R}%
$.\footnote{We exclude the trivial algebra $\mathbb{C}\widehat{1}$, which is a
subalgebra of all other subalgebras.} These abelian subalgebras are also
called \emph{contexts}. $\mathcal{V(R)}$ is partially ordered by inclusion and
thus becomes a category. There is an arrow $i_{V^{\prime}V}:V^{\prime
}\rightarrow V$ if and only if $V^{\prime}\subseteq V$, and then
$i_{V^{\prime}V}$ is just the inclusion (or the identity arrow if $V^{\prime
}=V$). The category $\mathcal{V(R)}$ is called the \emph{context category} and
serves as our index category. The process of going from one abelian algebra
$V$ to a smaller algebra $V^{\prime}\subset V$ can be seen as a process of
\emph{coarse-graining}: the algebra $V^{\prime}$ contains less physical
quantities (self-adjoint operators), so we can describe less physics in
$V^{\prime}$ than in $V$. We collect all the `local state spaces' $\Sigma_{V}$ into one large object:

\begin{definition}
The \textbf{spectral presheaf} $\underline{\Sigma}$ is the presheaf\footnote{A
presheaf is a contravariant functor from $\mathcal{V(R)}$ to
$\operatorname*{\mathbf{Set}}$, and obviously, $\underline{\Sigma}$ is of this kind. In
our notation, presheaves will always be underlined.} over $\mathcal{V(R)}$ defined

\begin{enumerate}
\item[a)] on objects: for all $V\in\mathcal{V(R)}$, $\underline{\Sigma}_{V}=\Sigma_{V}$ is the
Gel'fand spectrum of $V$,

\item[b)] on arrows: for all $i_{V^{\prime}V}$, $\underline{\Sigma
}(i_{V^{\prime}V}):\underline{\Sigma}_{V}\rightarrow\underline{\Sigma
}_{V^{\prime}}$ is given by restriction, $\lambda\mapsto\lambda|_{V^{\prime}%
}.$
\end{enumerate}
\end{definition}

The spectral presheaf was first considered by Chris Isham and Jeremy Butterfield 
in the series \cite{IB98,IB99,IBH00,IB02} (see in particular the third of these 
papers). The presheaves over $\mathcal{V(R)}$ form a topos $\operatorname*{\mathbf{Set}}%
^{\mathcal{V(R)}^{op}}$. The arrows in this topos are natural transformations
between the presheaves. Isham and Butterfield developed the idea that this is the 
appropriate topos for quantum theory. The object $\underline{\Sigma}$ in 
$\operatorname*{\mathbf{Set}}^{\mathcal{V(R)}^{op}}$ serves as a state space analogue. 
In the light of the new developments in \cite{DI07a}-\cite{DI07d}, using formal languages, 
we identify $\underline{\Sigma}$ as the state object in 
$\operatorname*{\mathbf{Set}}^{\mathcal{V(R)}^{op}}$, i.e., the representative of the 
symbol $\Sigma$ of our formal language $\mathcal{L}(S)$.

\ 

Isham and Butterfield showed that the Kochen-Specker theorem is exactly
equivalent to the fact that the spectral presheaf $\underline{\Sigma}$ has no
elements, in the sense that there are no arrows from the terminal object
$\underline{1}$ in $\operatorname*{\mathbf{Set}}^{\mathcal{V(R)}^{op}}$ to
$\underline{\Sigma}$. It is not hard to show that having an element of
$\underline{\Sigma}$ would allow the assignment of real values to all physical
quantities at once.

\section{Representing $\mathcal{L}(S)$ in the presheaf topos
$\operatorname*{\mathbf{Set}}^{\mathcal{V(R)}^{op}}$%
\label{_RepresentingL(S)InSetV(R)op}}

\textbf{The quantity-value object for quantum theory.} We already have
identified the topos for the quantum-theoretical description of a system $S$
and the state object $\underline{\Sigma}$ in this topos. Let $V\in
\mathcal{V(R)}$ be a context, then $\downarrow\!\!V:=\{V^{\prime}\in
\mathcal{V(R)}\mid V^{\prime}\subseteq V\}$ denotes the set of all subalgebras of
$V$, equipped with the partial order inherited from $\mathcal{V(R)}$. It can
be shown that the symbol $\mathcal{R}$ should be represented by
the following presheaf \cite{DI07c}:

\begin{definition}
The \textbf{presheaf }$\underline{\mathbb{R}^{\leftrightarrow}}$\textbf{ of
order-preserving and -reversing functions} on $\mathcal{V(R)}$ is defined

\begin{enumerate}
\item[a)] on objects: for all $V\in\mathcal{V(R)}$, $\underline{\mathbb{R}%
^{\leftrightarrow}}_{V}:=\{(\mu,\nu)\mid\mu:\downarrow\!\!V\rightarrow\mathbb{R}$ is
order-preserving, $\nu:\downarrow\!\!V\rightarrow\mathbb{R}$ is order-reversing
and $\mu\leq\nu\},$

\item[b)] on arrows: for all $i_{V^{\prime}V}$, $\underline{\mathbb{R}%
^{\leftrightarrow}}(i_{V^{\prime}V}):\underline{\mathbb{R}^{\leftrightarrow}%
}_{V}\rightarrow\underline{\mathbb{R}^{\leftrightarrow}}_{V^{\prime}}$ is
given by restriction, $(\mu,\nu)\mapsto(\mu|_{V^{\prime}},\nu|_{V^{\prime}})$.
\end{enumerate}
\end{definition}

Here, an order-preserving function $\mu:\downarrow\!\!V\rightarrow\mathbb{R}$ is
a function such that $V''\subseteq V'$ (where $V',V''\in \downarrow\!\!V$) 
implies $\mu(V'')\leq\mu(V')$. Order-reversing functions are defined analogously.

\ 

The presheaf $\underline{\mathbb{R}^{\leftrightarrow}}$ is \emph{not} the
real-number object $\underline{\mathbb{R}}$ in the topos $\operatorname*{\mathbf{Set}}%
^{\mathcal{V(R)}^{op}}$, which is the constant presheaf defined by
$\underline{\mathbb{R}}(V):=\mathbb{R}$ for all $V$ and $\underline
{\mathbb{R}}(i_{V^{\prime}V}):\mathbb{R\rightarrow R}$ as the identity. From
the Kochen-Specker theorem, we would not expect that physical quantities take
their values in the real numbers. (This does not mean that the results of
measurements are not real numbers. We do not discuss measurement here.) More
importantly, the presheaf $\underline{\mathbb{R}^{\leftrightarrow}}$ takes
into account the coarse-graining inherent in the base category $\mathcal{V(R)}%
$: at each stage $V$, a pair $(\mu,\nu)$ consisting of an order-preserving and
an order-reversing function defines a whole range or interval $[\mu
(V),\nu(V)]$ of real numbers, not just a single real number. (It can happen
that $\mu(V)=\nu(V)$.) If we go to a smaller subalgebra $V^{\prime}\subset V$,
which is a kind of coarse-graining, then we have $\mu(V^{\prime})\leq\mu(V)$
and $\nu(V^{\prime})\geq\nu(V)$, so the corresponding interval $[\mu
(V^{\prime}),\nu(V^{\prime})]$ can only become larger.

\ 

\textbf{The representation of function symbols }$A:\Sigma\rightarrow
\mathcal{R}$\textbf{. }In order to represent a physical quantity $A$ belonging
to the system $S$ as an arrow from $\underline{\Sigma}$ to the presheaf
$\underline{\mathbb{R}^{\leftrightarrow}}$ of \ `values', we have to use a
two-step process.

\ 

\textbf{1.} Let $\widehat{A}\in\mathcal{R}$ be the self-adjoint operator
representing $A$. For each abelian subalgebra $V\in\mathcal{V(R)}$, we define%
\begin{align*}
\delta^{o}(\widehat{A})_{V} &  :=\bigwedge\{\widehat{B}\in V_{sa}|\widehat
{B}\geq_{s}\widehat{A}\},\\
\delta^{i}(\widehat{A})_{V} &  :=\bigvee\{\widehat{C}\in V_{sa}|\widehat
{C}\leq_{s}\widehat{A}\}.
\end{align*}
Here, the \emph{spectral order} on self-adjoint operators is used \cite{Ols71,deG05}. 
This is defined for all self-adjoint operators $\widehat{A},\widehat{B}$ with 
spectral families $\widehat{E}^{A}$ resp. $\widehat{E}^{B}$ as%
\[
\widehat{A}\leq_{s}\widehat{B}:\Leftrightarrow(\forall\lambda\in
\mathbb{R}:\widehat{E}_{\lambda}^{A}\geq\widehat{E}_{\lambda}^{B}).
\]
Equipped with the spectral order, the set of self-adjoint operators in a von
Neumann algebra becomes a boundedly complete lattice. In particular, the
mappings $\delta_{V}^{o},\delta_{V}^{i}:\mathcal{R}_{sa}\rightarrow V_{sa}$
are well-defined. We call these mappings \emph{outer} and \emph{inner
daseinisation}, respectively. The outer daseinisation $\delta^{o}(\widehat
{A})_{V}$ of $\widehat{A}$ to the context $V$ is the approximation from above
by the smallest self-adjoint operator in $V$ that is spectrally larger than
$\widehat{A}$. Likewise, the inner daseinisation $\delta^{i}(\widehat{A})_{V}$
is the approximation from below by the largest self-adjoint operator in $V$
that is spectrally smaller then $\widehat{A}$. Since the spectral order is
coarser than the usual, linear order, we have, for all $V$,%
\[
\delta^{i}(\widehat{A})_{V}\leq\widehat{A}\leq\delta^{o}(\widehat{A})_{V}.
\]
One can show that the spectra of $\delta^{i}(\widehat{A})_{V}$ and $\delta
^{o}(\widehat{A})_{V}$ are subsets of the spectrum of $\widehat{A}$, which
seems physically very sensible. If we used the approximation in the linear
order, this would not hold in general. The approximation of self-adjoint
operators in the spectral order was suggested by de Groote \cite{deG05b, 
deG07}. If $V^{\prime}\subset V$, then, by construction,
$\delta^{i}(\widehat{A})_{V^{\prime}}\leq_{s}\delta^{i}(\widehat{A})_{V}$ and
$\delta^{o}(\widehat{A})_{V^{\prime}}\geq_{s}\delta^{o}(\widehat{A})_{V}$,
which implies%
\begin{align*}
\delta^{i}(\widehat{A})_{V^{\prime}}  & \leq\delta^{i}(\widehat{A})_{V},\\
\delta^{o}(\widehat{A})_{V^{\prime}}  & \geq\delta^{o}(\widehat{A})_{V}.
\end{align*}
In this sense, the approximations to $\widehat{A}$ become coarser if the
context becomes smaller.

\ 

\textbf{2.} Now that we have constructed a pair $(\delta^{i}(\widehat{A}%
)_{V},\delta^{o}(\widehat{A})_{V})$ of operators approximating $\widehat{A}$
from below and from above for each context $V$, we can define a natural
transformation $\breve{\delta}(\widehat A)$ from $\underline{\Sigma}$ to
$\underline{\mathbb{R}^{\leftrightarrow}}$ in the following way: let
$V\in\mathcal{V(R)}$ be a context, and let $\lambda\in\underline{\Sigma}_{V}$
be a pure state of $V$. Then define, for all $V^{\prime}\in\downarrow\!\!V$,%
\[
\mu_{\lambda}(V^{\prime}):=\lambda(\delta^{i}(\widehat{A})_{V^{\prime}})=
\overline{\delta^{i}(\widehat{A})_{V^{\prime}}}(\lambda),
\]
where $\overline{\delta^{i}(\widehat{A})_{V^{\prime}}}$ is the Gel'fand 
transform of the self-adjoint operator $\delta^{i}(\widehat{A})_{V^{\prime}}$.
From the theory of abelian $C^{\ast}$-algebras, it is known that
$\lambda(\delta^{i}(\widehat{A})_{V^{\prime}})\in\operatorname*{sp}(\delta
^{i}(\widehat{A})_{V^{\prime}})$ (see e.g. \cite{KR83}). Let $V^{\prime
},V^{\prime\prime}\in\downarrow\!\!V$ such that $V^{\prime\prime}\subset
V^{\prime}$. We saw that $\delta^{i}(\widehat{A})_{V^{\prime\prime}}\leq
\delta^{i}(\widehat{A})_{V^{\prime}}$, which implies $\lambda(\delta
^{i}(\widehat{A})_{V^{\prime\prime}})\leq\lambda(\delta^{i}(\widehat
{A})_{V^{\prime}})$, so $\mu_{\lambda}:\downarrow\!\!V\rightarrow\mathbb{R}$ is
an order-preserving function. Analogously, let%
\[
\nu_{\lambda}(V^{\prime}):=\lambda(\delta^{o}(\widehat{A})_{V^{\prime}})=
\overline{\delta^{o}(\widehat{A})_{V^{\prime}}}(\lambda)
\]
for all $V^{\prime}\in\downarrow\!\!V$. We obtain an order-reversing function
$\nu_{\lambda}:\downarrow\!\!V\rightarrow\mathbb{R}$. Then, for all
$V\in\mathcal{V(R)}$, let%
\begin{align*}
\breve{\delta}(\widehat A)(V):\underline{\Sigma}_{V} &  \longrightarrow
\underline{\mathbb{R}\mathbf{^{\leftrightarrow}}}_{V}\\
\lambda &  \longmapsto(\mu_{\lambda},\nu_{\lambda}).
\end{align*}
By construction, these mappings are the components of a natural transformation
$\breve{\delta}(\widehat A)$$:\underline{\Sigma}\rightarrow\underline
{\mathbb{R}^{\leftrightarrow}}$. For all $V,V'\in\mathcal{V(R)}$ such that 
$V'\subseteq V$, we have a commuting diagram
\begin{center}
\setsqparms[1`1`1`1;1000`700]
\square[\underline{\Sigma}_V`\underline{\Sigma}_{V'}`
\underline{\mathbb{R}\mathbf{^{\leftrightarrow}}}_{V}`
\underline{\mathbb{R}\mathbf{^{\leftrightarrow}}}_{V'};
\underline{\Sigma}(i_{V'V})`\breve{\delta}(\widehat A)(V)`\breve{\delta}(\widehat A)(V')`
\underline{\mathbb{R}\mathbf{^{\leftrightarrow}}}(i_{V'V})]
\end{center}
The arrow $\breve{\delta}(\widehat A):\underline{\Sigma}\rightarrow\underline{\mathbb{R}
\mathbf{^{\leftrightarrow}}}$ in the presheaf topos
$\operatorname*{\mathbf{Set}}^{\mathcal{V(R)}^{op}}$ is the representative of the
physical quantity $A$, which is abstractly described by the function symbol
$A:\Sigma\rightarrow\mathcal{R}$ in our formal language. The physical content,
namely the appropriate choice of the self-adjoint operator $\widehat{A}$ from
which we construct the arrow $\breve{\delta}(\widehat A)$, is not part of the
language, but part of the representation.\footnote{The current scheme is not
completely topos-internal yet. It is an open question if every arrow from
$\underline{\Sigma}$ to $\underline{\mathbb{R}^{\leftrightarrow}}$ comes from
a self-adjoint operator. This is why we start from a self-adjoint operator
$\widehat{A}$ to construct $\breve{\delta}(\widehat A)$. We are working on a
more internal characterisation.}

\ 

\textbf{The representation of propositions.} As discussed in subsection
\ref{__ToposThAndPhysics}, in classical physics the subset of state space
$\mathcal{S}$ representing a proposition \textquotedblleft$A\in\Delta
$\textquotedblright\ is constructed by taking the inverse image $f_{A}%
^{-1}(\Delta)$ of $\Delta$ under the function representing $A$. We will use
the analogous construction in the topos formulation of quantum theory: the set
$\Delta$ is a subset (that is, subobject) of the quantity-value object
$\mathbb{R}$ in classical physics, so we start from a subobject $\Theta$ of
the presheaf $\underline{\mathbb{R}^{\leftrightarrow}}$. We get a subobject of
the state object $\underline{\Sigma}$ by pullback along
$\breve{\delta}(\widehat A)$, which we denote by
$\breve{\delta}(\widehat A)$$^{-1}(\Theta)$.\footnote{This is a well-defined
categorical construction, since the pullback of a monic is a monic, so we get
a subobject of $\underline{\Sigma}$ from a subobject of $\underline
{\mathbb{R}^{\leftrightarrow}}$.} For details see subsection 3.6 in
\cite{DI07c} and also \cite{HS07}.

\ 

In both classical and quantum theory, propositions are represented by
subobjects of the quantity-value object (state space $\mathcal{S}$ resp.
spectral presheaf $\underline{\Sigma}$). Such subobjects are constructed by
pullback from subobjects of the quantity-value object (real numbers
$\mathbb{R}$ resp. presheaf of order-preserving and -reversing functions
$\underline{\mathbb{R}^{\leftrightarrow}}$). The interpretation and meaning of
such propositions is determined by the internal logic of the topos
($\operatorname*{\mathbf{Set}}$ resp. $\operatorname*{\mathbf{Set}}^{\mathcal{V(R)}^{op}}$). In
the classical case, where $\operatorname*{\mathbf{Set}}$ is used, this is the ordinary
Boolean logic that we are familiar with. In the quantum case, the internal
logic of the presheaf topos $\operatorname*{\mathbf{Set}}^{\mathcal{V(R)}^{op}}$ has to
be used. This intuitionistic logic can be interpreted using Kripke-Joyal
semantics, see e.g. chapter VI in \cite{MM92}.

\ 

\textbf{The Heyting algebra structure of subobjects.} In the next section, we
discuss the representation of states in the topos $\operatorname*{\mathbf{Set}}%
^{\mathcal{V(R)}^{op}}$ and the assignment of truth-values to propositions.
Before doing so, it is worth noting that the subobjects of $\underline{\Sigma
}$ form a Heyting algebra (since the subobjects of any object in a topos do),
so we have mapped propositions \textquotedblleft$A\in\Delta$\textquotedblright%
\ (understood as discussed) to a \emph{distributive} lattice with a
pseudocomplement. Together with the results from the next section, we have a
completely new form of quantum logic, based upon the internal logic of the
presheaf topos $\operatorname*{\mathbf{Set}}^{\mathcal{V(R)}^{op}}$. Since this is a
distributive logic and since the internal logic of a topos has powerful rules
of inference, this kind of quantum logic is potentially much better
interpretable than ordinary quantum logic of the Birkhoff-von Neumann kind.
The latter type of quantum logic and its generalisations are based on
nondistributive structures and lack a deductive system.

\section{Truth objects and truth-values\label{_TruthObjectsAndTruthValues}}

In classical physics, a state is just a point of state space.\footnote{One
might call this a \emph{pure} state, though this is not customary in classical
physics. Such a state actually is a point measure on state space, in contrast
to more general probability measures that describe general states. We only
consider pure states here and identify the point measure with the corresonding
point of state space.} Since, as we saw, the spectral presheaf $\underline
{\Sigma}$ has no elements (or, global elements\footnote{Elements
$\underline{1}\rightarrow\underline{\mathcal{P}}$ of a presheaf $\underline
{\mathcal{P}}$ are called \emph{global elements} or \emph{global sections} in
category theory. We follow this convention to avoid confusion with points or
elements of sets.}), we must represent states differently in the presheaf
topos $\operatorname*{\mathbf{Set}}^{\mathcal{V(R)}^{op}}$.\\

\subsection{Generalised elements as generalised
states\label{__GenElsAsGenStates}}

One direct way, suggested in \cite{HS07}, is the following generalisation:
$\underline{\Sigma}$ has no global elements $\underline{1}\rightarrow
\underline{\Sigma}$, but it does have subobjects $\underline{U}\hookrightarrow
\underline{\Sigma}$. In algebraic geometry and more generally in category
theory, such monics (and, more generally, arbitrary arrows) are called 
generalised elements \cite{LR03}. We could postulate that these subobjects, 
or some of them, are `generalised states'. Consider another subobject of 
$\underline{\Sigma}$ that represents a proposition
\textquotedblleft$A\in\Delta$\textquotedblright\ about the quantum system,
given by its characteristic arrow $\chi_{\underline{S}}:\underline{\Sigma
}\rightarrow\underline{\Omega}$. Then we can compose these arrows%
\[
\underline{U}\hookrightarrow\underline{\Sigma}\rightarrow\underline{\Omega}%
\]
to obtain an arrow $\underline{U}\rightarrow\underline{\Omega}$. This is
\emph{not} a global element $\underline{1}\rightarrow\underline{\Omega}$ of
$\underline{\Omega}$, and by construction, it cannot be, since $\underline
{\Sigma}$ has no global elements, but it is a generalised element of
$\underline{\Omega}$. It might be possible to give a physical meaning to these
arrows $\underline{U}\rightarrow\underline{\Omega}$ if one can (a) give
physical meaning to the subobject $\underline{U}\hookrightarrow\underline
{\Sigma}$, making clear what a generalised state actually is, and (b) give a
logical and physical interpretation of an arrow $\underline{U}\rightarrow
\underline{\Omega}$. While a global element $\underline{1}\rightarrow
\underline{\Omega}$ is interpreted as a truth-value in the internal logic of a
topos, the logical interpretation of an arrow $\underline{U}%
\rightarrow\underline{\Omega}$ is not so clear.

\ 

We want to emphasise that mathematically, the above construction is perfectly
well-defined. It remains to be worked out if a physical and logical meaning
can be attached to it.\\

\subsection{The construction of truth objects}

We now turn to the construction of so-called `truth objects' from pure quantum
states $\psi$, see also \cite{DI07b}. (To be precise, a unit vector $\psi$ in the
Hilbert space $\mathcal{H}$ represents a vector state $\varphi_{\psi
}:\mathcal{R}\rightarrow\mathbb{C}$ on a von Neumann algebra, given by
$\varphi_{\psi}(\widehat{A}):=\left\langle \psi\right\vert \widehat
{A}\left\vert \psi\right\rangle $ for all $\widehat{A}\in\mathcal{R}$. If
$\mathcal{R}=\mathcal{B(H)}$, then every $\varphi_{\psi}$ is a pure state.) Of
course, the Hilbert space $\mathcal{H}$ is the Hilbert space on which the von
Neumann algebra of observables $\mathcal{R}\subseteq\mathcal{B(H)}$ is
represented. This is the most direct way in which Hilbert space actually
enters the mathematical constructions inside the topos $\operatorname*{\mathbf{Set}}%
^{\mathcal{V(R)}^{op}}$. However, we will see how this direct appeal to 
Hilbert space possibly can be circumvented.

\ 

Given a subobject of $\underline{\Sigma}$ that represents some proposition, a
truth object will allow us to construct a global element $\underline
{1}\rightarrow\underline{\Sigma}$ of $\underline{\Sigma}$, as we will show in
subsection \ref{__AssignmentOfTruthValues}. This means that from a proposition
and a state, we \emph{do} get an actual truth-value for that proposition in
the internal logic of the topos $\operatorname*{\mathbf{Set}}^{\mathcal{V(R)}^{op}}$.
The construction of truth objects is a direct generalisation of the classical case.

\ 

For the moment, let us consider sets. Let $S$ be a subset of some larger set
$X$, and let $x\in X$. Then%
\[
(x\in S)\Leftrightarrow(S\in U(x)),
\]
where $U(x)$ denotes the set of neighbourhoods of $x$ in $X$. The key
observation is that while the l.h.s. cannot be generalised to the topos
setting, since we cannot talk about points like $x$, the r.h.s. can. The task
is to define neighbourhoods in a suitable manner. We observe that $U(x)$ is a
subset of the power set $PX=P(X)$, which is the same as an element of the
power set of the power set $PPX=P(P(X))$.

\ 

This leads to the idea that for each context $V\in\mathcal{V(R)}$, we must
choose an appropriate set of subsets of the Gel'fand spectrum $\underline
{\Sigma}_{V}$ such that these sets of subsets form an element in
$PP\underline{\Sigma}$. Additionally, the subsets we choose at each stage $V$
should be clopen, since the clopen subsets $P_{cl}(\underline{\Sigma}_{V})$
form a lattice that is isomorphic to the lattice $\mathcal{P}(V)$ of
projections in $V$.

\ 

The main difficulty lies in the fact that the spectral presheaf $\underline
{\Sigma}$ has no global elements, which is equivalent to the Kochen-Specker
theorem. A global element, if it existed, would pick one point
$\lambda_{V}$ from each Gel'fand spectrum $\underline{\Sigma}_{V}$
($V\in\mathcal{V(R)}$) such that, whenever $V^{\prime}\subset V$, we would
have $\lambda_{V^{\prime}}=\lambda_{V}|_{V^{\prime}}$. If we had such global
elements, we could define neighbourhoods for them by taking, for each
$V\in\mathcal{V(R)}$, neighbourhoods of $\lambda_{V}$ in $\underline{\Sigma
}_{V}$.

\ 

Since no such global elements exist, we cannot expect to have neighbourhoods
of \emph{points} at each stage. Rather, we will get neighbourhoods of
\emph{sets} at each stage $V$, and only for particular $V$, these sets will
have just one element. In any case, the sets will depend on the state $\psi$
in a straighforward manner. We define:

\begin{definition}
Let $\psi\in\mathcal{H}$ be a unit vector, let $\widehat{P}_{\psi}$ the
projection onto the corresponding one-dimensional subspace (i.e., ray) of
$\mathcal{H}$, and let $P_{cl}(\underline{\Sigma}_{V})$ be the clopen subsets
of the Gel'fand spectrum $\underline{\Sigma}_{V}$. If $S\in P_{cl}%
(\underline{\Sigma}_{V})$, then $\widehat{P}_{S}\in\mathcal{P}(V)$ denotes the
corresponding projection. The truth object $\mathbb{T}^{\psi}=(\mathbb{T}%
_{V}^{\psi})_{V\in\mathcal{V(R)}}$ is given by%
\[
\forall V\in\mathcal{V(R)}:\mathbb{T}_{V}^{\psi}:=\{S\in P_{cl}(\underline
{\Sigma}_{V})\mid\left\langle \psi\right\vert \widehat{P}_{S}\left\vert
\psi\right\rangle =1\}.
\]

\end{definition}

At each stage $V$, $\mathbb{T}_{V}^{\psi}$ collects all subsets $S$ of
$\underline{\Sigma}_{V}$ such that the expectation value of the projection
corresponding to this subset is $1$. From this definition, it is not clear at
first sight that the set $\mathbb{T}_{V}^{\psi}$ can be seen as a set of neighbourhoods.

\begin{lemma}
We have the following equalities:%
\begin{align*}
\forall V\in\mathcal{V(R)}:\mathbb{T}_{V}^{\psi}  &  =\{S\in P_{cl}%
(\underline{\Sigma}_{V})\mid\left\langle \psi\right\vert \widehat{P}%
_{S}\left\vert \psi\right\rangle =1\}\\
&  =\{S\in P_{cl}(\underline{\Sigma}_{V})\mid\widehat{P}_{S}\geq\widehat
{P}_{\psi}\}\\
&  =\{S\in P_{cl}(\underline{\Sigma}_{V})\mid\widehat{P}_{S}\geq\delta
^{o}(\widehat{P}_{\psi})_{V}\}\\
&  =\{S\in P_{cl}(\underline{\Sigma}_{V})\mid S\supseteq S_{\delta
^{o}(\widehat{P}_{\psi})_{V}}\}.
\end{align*}

\end{lemma}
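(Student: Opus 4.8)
The plan is to prove the lemma as a chain of four set equalities, checking in each case that the two membership conditions defining consecutive sets are equivalent for an arbitrary clopen $S\in P_{cl}(\underline{\Sigma}_V)$ with corresponding projection $\widehat{P}_S\in\mathcal{P}(V)$. Since all four sets live inside $P_{cl}(\underline{\Sigma}_V)$, it suffices to verify the equivalences pointwise in $S$.

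First I would settle the analytic heart, namely the equality of the first two sets. Because $\widehat{P}_S$ is a projection and $\psi$ is a unit vector, we have $\langle\psi|\widehat{P}_S|\psi\rangle=\langle\psi|\widehat{P}_S^2|\psi\rangle=\|\widehat{P}_S\psi\|^2$, and the Pythagorean decomposition $1=\|\psi\|^2=\|\widehat{P}_S\psi\|^2+\|(\widehat{1}-\widehat{P}_S)\psi\|^2$ shows that $\langle\psi|\widehat{P}_S|\psi\rangle=1$ holds exactly when $(\widehat{1}-\widehat{P}_S)\psi=0$, i.e.\ when $\psi$ lies in the range of $\widehat{P}_S$. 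Since $\widehat{P}_\psi$ projects onto the ray $\mathbb{C}\psi$, the inclusion $\mathbb{C}\psi\subseteq\operatorname{ran}\widehat{P}_S$ is equivalent to the operator inequality $\widehat{P}_\psi\leq\widehat{P}_S$ (range inclusion of projections is precisely the linear order). This yields the equality of the first and second sets.

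Next I would establish the equality of the second and third sets using the outer daseinisation. The crucial preliminary fact is that the spectral order $\leq_s$ and the linear order $\leq$ coincide on projections, which one checks directly from the spectral families; consequently $\delta^o(\widehat{P}_\psi)_V$, although defined as a spectral-order infimum over all of $V_{sa}$, is itself a projection and equals the \emph{least} projection in $\mathcal{P}(V)$ dominating $\widehat{P}_\psi$. Granting this, one direction is immediate: since $\widehat{P}_\psi\leq\delta^o(\widehat{P}_\psi)_V$ (as recorded in the excerpt), any $\widehat{P}_S\geq\delta^o(\widehat{P}_\psi)_V$ satisfies $\widehat{P}_S\geq\widehat{P}_\psi$. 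Conversely, if $\widehat{P}_S\geq\widehat{P}_\psi$ with $\widehat{P}_S\in\mathcal{P}(V)$, then $\widehat{P}_S$ is one of the projections in $V$ dominating $\widehat{P}_\psi$, so minimality of $\delta^o(\widehat{P}_\psi)_V$ forces $\widehat{P}_S\geq\delta^o(\widehat{P}_\psi)_V$.

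Finally I would pass from the operator inequality to the set inclusion via the lattice isomorphism $P_{cl}(\underline{\Sigma}_V)\cong\mathcal{P}(V)$, $S\leftrightarrow\widehat{P}_S$, noted earlier in the excerpt. As an order isomorphism it sends $\widehat{P}_S\geq\widehat{Q}$ to $S\supseteq S_{\widehat{Q}}$; taking $\widehat{Q}=\delta^o(\widehat{P}_\psi)_V$ with associated clopen set $S_{\delta^o(\widehat{P}_\psi)_V}$ gives the fourth equality. I expect the main obstacle to be the preliminary fact invoked in the third paragraph: showing with care that the outer daseinisation of a projection, defined as an infimum over $V_{sa}$ in the spectral order, actually collapses to the least projection in $\mathcal{P}(V)$ above $\widehat{P}_\psi$. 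This needs both the coincidence of the two orders on projections and the observation that the relevant spectral-order infimum is attained within the projection lattice $\mathcal{P}(V)$; once that is in hand, the remaining steps are routine.
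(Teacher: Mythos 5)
Your proposal is correct and follows essentially the same route as the paper's proof: the Pythagorean argument for $\langle\psi|\widehat{P}_S|\psi\rangle=1\Leftrightarrow\widehat{P}_S\geq\widehat{P}_\psi$, the minimality of $\delta^o(\widehat{P}_\psi)_V$ among projections of $V$ dominating $\widehat{P}_\psi$ (resting on the coincidence of the spectral and linear orders on projections, which the paper also only asserts in a footnote), and the order isomorphism $P_{cl}(\underline{\Sigma}_V)\cong\mathcal{P}(V)$ for the last step. You are in fact somewhat more careful than the paper in flagging that the spectral-order infimum over all of $V_{sa}$ must be shown to collapse to the least projection in $\mathcal{P}(V)$ above $\widehat{P}_\psi$.
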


\begin{proof}
If $\left\langle \psi\right\vert \widehat{P}_{S}\left\vert \psi\right\rangle
=1$, then $\psi$ lies entirely in the subspace of Hilbert space that
$\widehat{P}_{S}$ projects onto. This is equivalent to $\widehat{P}_{S}%
\geq\widehat{P}_{\psi}$. Since $\widehat{P}_{S}\in\mathcal{P}(V)$ and
$\delta^{o}(\widehat{P}_{\psi})_{V}$ is the \emph{smallest} projection in $V$
that is larger than $\widehat{P}_{\psi}$,\footnote{On projections, the
spectral order $\leq_{s}$ and the linear order $\leq$ coincide.} we also have
$\widehat{P}_{S}\geq\delta^{o}(\widehat{P}_{\psi})_{V}$. In the last step, we
simply go from the projections in $V$ to the corresponding clopen subsets of
$\underline{\Sigma}_{V}$.
\end{proof}

\ 

This reformulation shows that $\mathbb{T}_{V}^{\psi}$ actually consists of
subsets of the Gel'fand spectrum $\underline{\Sigma}_{V}$ that can be seen
as some kind of neigbourhoods, not of a single point of $\underline{\Sigma
}_{V}$, but of a certain subset of $\underline{\Sigma}_{V}$, namely
$S_{\delta^{o}(\widehat{P}_{\psi})_{V}}$. In the simplest case, we have
$\widehat{P}_{\psi}\in\mathcal{P}(V)$, so $\delta^{o}(\widehat{P}_{\psi}%
)_{V}=\widehat{P}_{\psi}$. Then $S_{\delta^{o}(\widehat{P}_{\psi})_{V}%
}=S_{\widehat{P}_{\psi}}$, and this subset contains a single element, namely
the pure state $\lambda$ such that%
\[
\lambda(\widehat{P}_{\psi})=1
\]
and $\lambda(\widehat{Q})=0$ for all $\widehat{Q}\in\mathcal{P}(V)$ such that
$\widehat{Q}\widehat{P}_{\psi}=0$. In this case, $\mathbb{T}_{V}^{\psi}$
actually consists of all the clopen neighbourhoods of the point $\lambda$ in
$\underline{\Sigma}_{V}$.

\ 

In general, if $\widehat{P}_{\psi}$ does not lie in the projections
$\mathcal{P}(V)$, then there is no subset of $\underline{\Sigma}_{V}$ that
corresponds directly to $\widehat{P}_{\psi}$. We must first approximate
$\widehat{P}_{\psi}$ by a projection in $V$, and $\delta^{o}(\widehat{P}%
_{\psi})_{V}$ is the smallest projection in $V$ larger than $\widehat{P}%
_{\psi}$. The projection $\delta^{o}(\widehat{P}_{\psi})_{V}$ corresponds to a
subset $S_{\delta^{o}(\widehat{P}_{\psi})_{V}}\subseteq\underline{\Sigma}_{V}$
that may contain more than one element. However, $\mathbb{T}_{V}^{\psi}$ can
still be seen as a set of neighbourhoods, but now of this set $S_{\delta
^{o}(\widehat{P}_{\psi})_{V}}$ rather than of a single point.

\ 

It is an interesting and non-trivial point that the (outer) daseinisation
$\delta^{o}(\widehat{P}_{\psi})_{V}$ ($V\in\mathcal{V(R)}$) shows up in this
construction. We did not discuss this here, but the subobjects of
$\underline{\Sigma}$ constructed from the outer daseinisation of projections
play a central r\^{o}le in the representation of a certain propositional
language $\mathcal{PL}(S)$ that one can attach to a physical system $S$
\cite{DI07a,DI07b}. Moreover, these subobjects are `optimal' in the sense
that, whenever $V^{\prime}\subset V$, the restriction from $S_{\delta
^{o}(\widehat{P})_{V}}$ to $S_{\delta^{o}(\widehat{P})_{V^{\prime}}}$ is
\emph{surjective}, see Theorem 3.1 in \cite{DI07b}. This property can also 
lead the way to a more internal characterisation of truth-objects, without 
reference to a state $\psi$ and hence to Hilbert space.\\

\subsection{Truth objects and Birkhoff-von Neumann quantum logic}

There is yet another point of view on what a truth object $\mathbb{T}^{\psi}$
is, closer the ordinary quantum logic, which goes back to the famous paper
\cite{BV36} by Birkhoff and von Neumann. For now, let us assume that $\mathcal{R}%
=\mathcal{B(H)}$, then we write $\mathcal{P(H)}:=\mathcal{P(B(H))}$ for the
lattice of projections on Hilbert space. In their paper, Birkhoff and von
Neumann identify a proposition \textquotedblleft$A\in\Delta$\textquotedblright%
\ about a quantum system with a projection operator $\widehat{E}[A\in
\Delta]\in\mathcal{P(H)}$ via the spectral theorem \cite{KR83} and interpret the
lattice structure of $\mathcal{P(H)}$ as giving a quantum logic. This is very
different from the topos form of quantum logic, since $\mathcal{P(H)}$ is a
\emph{non-distributive} lattice, leading to all the well-known
interpretational difficulties. Nonetheless, in this subsection we want to
interpret truth objects from the perspective of Birkhoff-von Neumann quantum logic.

\ 

The implication in ordinary quantum logic is given by the partial order on
$\mathcal{P(H)}$: a proposition \textquotedblleft$A\in\Delta_{1}%
$\textquotedblright\ implies a proposition \textquotedblleft$B\in\Delta_{2}%
$\textquotedblright\ (where we can have $B=A$) if and only if $\widehat
{E}[A\in\Delta_{1}]\leq\widehat{E}[B\in\Delta_{2}]$ holds for the
corresponding projections.

\ 

The idea now is that, given a pure state $\psi$ and the corresponding
projection $\widehat{P}_{\psi}$ onto a ray, we can collect all the projections
larger than or equal to $\widehat{P}_{\psi}$. We denote this by%
\[
T^{\psi}:=\{\widehat{P}\in\mathcal{P(H)}\mid\widehat{P}\geq\widehat{P}_{\psi}\}.
\]
The propositions represented by these projections are exactly those
propositions about the quantum system that are (totally) true if the system is
in the state $\psi$. Totally true means `true with probability $1$' in an
instrumentalist interpretation. If, for example, a projection $\widehat
{E}[A\in\Delta]$ is larger than $\widehat{P}_{\psi}$ and hence contained in
$T^{\psi}$, then, upon measurement of the physical quantity $A$, we will find
the measurement result to lie in the set $\Delta$ with certainty (i.e., with 
probability $1$).

\ 

$T^{\psi}$ is a maximal (proper) filter in $\mathcal{P(H)}$. Every pure state 
$\psi$ gives rise to such a maximal filter $T^{\psi}$, and clearly, the mapping 
$\psi\mapsto T^{\psi}$ is injective. We can obtain the truth object 
$\mathbb{T}^{\psi}$ from the maximal filter $T^{\psi}$ simply by defining%
\[
\forall V\in\mathcal{V(R)}:\mathbb{T}_{V}^{\psi}:=T^{\psi}\cap V.
\]
In each context $V$, we collect all the projections larger than $\widehat
{P}_{\psi}$. On the level of propositions, we have all the propositions about
physical quantities $A$ \emph{in the context }$V$ that are totally true in the
state $\psi$.\\

\subsection{The assignment of truth-values to
propositions\label{__AssignmentOfTruthValues}}

We return to the consideration of the internal logic of the topos 
$\operatorname*{\mathbf{Set}}^{\mathcal{V(R)}^{op}}$ and show how to define a 
global element $\underline{1}\rightarrow\underline{\Omega}$ of the subobject 
classifier from a clopen subobject $\underline{S}$ of $\underline{\Sigma}$ 
and a truth object $\mathbb{T}^{\psi}$. The subobject $\underline{S}$
represents a proposition about the quantum
system, the truth object $\mathbb{T}^{\psi}$ represents a state, and
the global element of $\underline{\Omega}$ will be interpreted as the
truth-value of the proposition in the given state. Thus, we make use of the
internal logic of the topos $\operatorname*{\mathbf{Set}}^{\mathcal{V(R)}^{op}}$ of
presheaves over the context category $\mathcal{V(R)}$ to assign truth-values
to all propositions about a quantum system.

\ 

It is well known that the subobject classifier $\underline{\Omega}$ in a topos
of presheaves is the presheaf of \emph{sieves} (see e.g. \cite{MM92}). A sieve
$\sigma$ on an object $A$ in some category $\mathcal{C}$ is a collection of 
arrows with codomain $A$ with the following property: if $f:B\rightarrow A$ is in $\sigma$ and 
$g:C\rightarrow B$ is another arrow in $\mathcal{C}$, then $f\circ g:C\rightarrow A$
is in $\sigma$, too. In other words, a sieve on $A$ is a downward closed set of 
arrows with codomain $A$. Since the context category $\mathcal{V(R)}$ is a partially
ordered set, things become very simple: the only arrows with codomain $V$ are
the inclusions $i_{V^{\prime}V}$. Since such an arrow is specified uniquely by
its domain $V^{\prime}$, we can think of the sieve $\sigma$ on $V$ as consisting of
certain subalgebras $V^{\prime}$ of $V$. If $V^{\prime}\in\sigma$ and
$V^{\prime\prime}\subset V^{\prime}$, then $V^{\prime\prime}\in\sigma$.

\ 

The restriction mappings of the presheaf $\underline{\Omega}$ are given by
pullbacks of sieves. The pullback of sieves over a partially ordered set takes 
a particularly simple form:

\begin{lemma}
\label{L_PullbackOfSievesOnPoset}If $\sigma$ is a sieve on $V\in
\mathcal{V(R)}$ and $V^{\prime}\subset V$, then the pullback $\sigma\cdot
i_{V^{\prime}V}$ is given by $\sigma\cap\downarrow\!\!V^{\prime}$. (This holds
analogously for sieves on any partially ordered set, not just $\mathcal{V(R)}$).
\end{lemma}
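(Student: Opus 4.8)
The plan is to unwind the general categorical definition of the pullback of a sieve and then exploit the fact that $\mathcal{V(R)}$ is a poset, so that all the arrows and their composites become completely transparent.

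First I would recall the definition underlying the restriction maps of $\underline{\Omega}$. In an arbitrary category $\mathcal{C}$, if $\sigma$ is a sieve on an object $A$ and $h:B\rightarrow A$ is any arrow, then the pullback sieve on $B$ is
\[
\sigma\cdot h:=\{g\text{ with codomain }B\mid h\circ g\in\sigma\}.
\]
This is exactly the assignment that turns the collection of sieves into the presheaf $\underline{\Omega}$, so it is the formula we must evaluate in the present situation with $A=V$, $B=V'$ and $h=i_{V'V}$.

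Next I would specialise to $\mathcal{C}=\mathcal{V(R)}$. As already observed in the text, the only arrows with codomain $V'$ are the inclusions $i_{V''V'}$ with $V''\subseteq V'$, and each is determined by its domain $V''$; thus a sieve on $V'$ may be identified with a downward-closed subset of $\downarrow\!\!V'$. Since composition in a poset is again an inclusion, we have $i_{V'V}\circ i_{V''V'}=i_{V''V}$. Therefore $i_{V''V'}\in\sigma\cdot i_{V'V}$ holds if and only if $i_{V''V}\in\sigma$, i.e.\ (under the identification of arrows with their domains) if and only if $V''\in\sigma$. Combining this with the standing constraint $V''\subseteq V'$ gives precisely
\[
\sigma\cdot i_{V'V}=\{V''\mid V''\subseteq V'\text{ and }V''\in\sigma\}=\sigma\cap\downarrow\!\!V'.
\]

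Finally I would check the only remaining point, namely that $\sigma\cap\downarrow\!\!V'$ really is a sieve on $V'$: if $V''\in\sigma\cap\downarrow\!\!V'$ and $V'''\subseteq V''$, then $V'''\subseteq V'$ and, because $\sigma$ is downward closed, $V'''\in\sigma$, so $V'''\in\sigma\cap\downarrow\!\!V'$. I expect no genuine obstacle here; the whole content is the passage from the abstract pullback formula to its poset incarnation, and nothing in the argument uses any property of $\mathcal{V(R)}$ beyond its being partially ordered, which is exactly why the parenthetical generalisation to arbitrary posets holds verbatim.
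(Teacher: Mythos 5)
Your argument is correct and coincides with the paper's own proof: both start from the definition $\sigma\cdot i_{V'V}=\{i_{V''V'}\mid i_{V'V}\circ i_{V''V'}\in\sigma\}$, identify arrows with their domains because $\mathcal{V(R)}$ is a poset, and read off $\sigma\cap\downarrow\!\!V'$. The extra check that the result is itself a sieve is a harmless addition the paper leaves implicit.
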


\begin{proof}
For the moment, we switch to the arrows notation. By definition, the
pullback $\sigma\cdot i_{V^{\prime}V}$ is given by%
\[
\sigma\cdot i_{V^{\prime}V}:=\{i_{V^{\prime\prime}V^{\prime}}\mid
i_{V^{\prime}V}\circ i_{V^{\prime\prime}V^{\prime}}\in\sigma\}.
\]
We now identify arrows and subalgebras as usual and obtain (using the fact
that $V^{\prime\prime}\subseteq V^{\prime}$ implies $V^{\prime\prime}\subset
V$)%
\[
\{i_{V^{\prime\prime}V^{\prime}}\mid i_{V^{\prime}V}\circ i_{V^{\prime\prime
}V^{\prime}}\in\sigma\}\simeq\{V^{\prime\prime}\subseteq V^{\prime}\mid
V^{\prime\prime}\in\sigma\}=\downarrow\!\!V^{\prime}\cap\sigma.
\]
Since $\downarrow V^{\prime}$ is the maximal sieve on $V^{\prime}$, the
pullback $\sigma\cdot i_{V^{\prime}V}$ is given as the intersection of
$\sigma$ with the maximal sieve on $V^{\prime}$.
\end{proof}

\ 

The \emph{name} $\ulcorner\underline{S}\urcorner$ of the subobject
$\underline{S}$ is the unique arrow $\underline{1}\rightarrow P\underline
{\Sigma}=\underline{\Omega}^{\underline{\Sigma}}$ into the power object of
$\underline{\Sigma}$ (i.e., the subobjects of $\underline{\Sigma}$) that
`picks out' $\underline{S}$ among all subobjects. $\ulcorner\underline
{S}\urcorner$ is a global element of $P\underline{\Sigma}$. Here, one uses the
fact that power objects behave like sets, in particular, they have global
elements. Since we assume that $\underline{S}$ is a \emph{clopen} subobject,
we also get an arrow $\underline{1}\rightarrow P_{cl}\underline{\Sigma}$ into
the clopen power object of $\underline{\Sigma}$, see \cite{DI07b}. We denote this arrow
by $\ulcorner\underline{S}\urcorner$, as well.

\ 

Since $\mathbb{T}^{\psi}\in PP_{cl}\underline{\Sigma}$ is a collection of
clopen subobjects of $\underline{\Sigma}$, it makes sense to ask if
$\underline{S}$ is among them; an expression like $\ulcorner\underline
{S}\urcorner\in\mathbb{T}^{\psi}$ is well-defined. We define, for all 
$V\in\mathcal{V(R)}$, the \emph{valuation}%
\[
v(\ulcorner\underline{S}\urcorner\in\mathbb{T}^{\psi})_{V}:=\{V^{\prime
}\subseteq V\mid\underline{S}(V^{\prime})\in\mathbb{T}_{V^{\prime}}^{\psi}\}.
\]
At each stage $V$, we collect all those subalgebras of $V$ such that
$\underline{S}(V^{\prime})$ is contained in $\mathbb{T}_{V^{\prime}}^{\psi}$.

\ 

In order to construct a global element of the presheaf of sieves $\underline{\Omega}$, 
we must first show that $v(\ulcorner\underline{S}\urcorner\in\mathbb{T}^{\psi})_{V}$ 
is a sieve on $V$. In the proof we use the fact that the subobjects obtained from
daseinisation are optimal in a certain sense.

\begin{proposition}
\label{P_val(S in T)_VIsSieveOnV}$v(\ulcorner\underline{S}\urcorner
\in\mathbb{T}^{\psi})_{V}:=\{V^{\prime}\subseteq V\mid\underline{S}(V^{\prime
})\in\mathbb{T}_{V^{\prime}}^{\psi}\}$ is a sieve on $V$.
\end{proposition}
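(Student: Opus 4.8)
The plan is to unwind the definition of a sieve and reduce the statement to a single downward-closure property. Since $\mathcal{V(R)}$ is a poset, a sieve on $V$ is exactly a downward-closed set of subalgebras $V'\subseteq V$; thus it suffices to show that if $V'\in v(\ulcorner\underline{S}\urcorner\in\mathbb{T}^{\psi})_V$ and $V''\subseteq V'$, then $V''\in v(\ulcorner\underline{S}\urcorner\in\mathbb{T}^{\psi})_V$. To make this tractable I would first rewrite the membership condition using the reformulation of the previous Lemma, namely $\mathbb{T}_{W}^{\psi}=\{S\in P_{cl}(\underline{\Sigma}_W)\mid S\supseteq S_{\delta^{o}(\widehat{P}_{\psi})_{W}}\}$. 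With this, $V'\in v(\ulcorner\underline{S}\urcorner\in\mathbb{T}^{\psi})_V$ becomes the concrete inclusion $\underline{S}(V')\supseteq S_{\delta^{o}(\widehat{P}_{\psi})_{V'}}$, and what I must produce is the analogous inclusion $\underline{S}(V'')\supseteq S_{\delta^{o}(\widehat{P}_{\psi})_{V''}}$.

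The core of the argument chains two facts through the restriction map $r:=\underline{\Sigma}(i_{V''V'}):\underline{\Sigma}_{V'}\rightarrow\underline{\Sigma}_{V''}$, $\lambda\mapsto\lambda|_{V''}$. First, because $\underline{S}$ is a \emph{subobject} (subpresheaf) of $\underline{\Sigma}$, it is stable under restriction, so $r(\underline{S}(V'))\subseteq\underline{S}(V'')$. Second, I would invoke the optimality of the daseinisation subobjects quoted in the text (Theorem 3.1 in \cite{DI07b}): since $V''\subseteq V'$, the restriction of $S_{\delta^{o}(\widehat{P}_{\psi})_{V'}}$ to $V''$ is \emph{surjective} onto $S_{\delta^{o}(\widehat{P}_{\psi})_{V''}}$, i.e.\ $r\big(S_{\delta^{o}(\widehat{P}_{\psi})_{V'}}\big)=S_{\delta^{o}(\widehat{P}_{\psi})_{V''}}$. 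Combining these, starting from the hypothesis $\underline{S}(V')\supseteq S_{\delta^{o}(\widehat{P}_{\psi})_{V'}}$, applying $r$ (which preserves inclusions) and then the subpresheaf property gives
\[
\underline{S}(V'')\supseteq r\big(\underline{S}(V')\big)\supseteq r\big(S_{\delta^{o}(\widehat{P}_{\psi})_{V'}}\big)=S_{\delta^{o}(\widehat{P}_{\psi})_{V''}},
\]
which is precisely $V''\in v(\ulcorner\underline{S}\urcorner\in\mathbb{T}^{\psi})_V$, establishing downward closure and hence the sieve property.

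The only non-formal ingredient, and therefore the step I expect to be the genuine obstacle, is the surjectivity $r\big(S_{\delta^{o}(\widehat{P}_{\psi})_{V'}}\big)=S_{\delta^{o}(\widehat{P}_{\psi})_{V''}}$: this is exactly where the \emph{outer} daseinisation of the projection $\widehat{P}_{\psi}$ matters, and it is what fails for a naive choice of approximating subobject. Everything else is bookkeeping — the inclusion $r(\underline{S}(V'))\subseteq\underline{S}(V'')$ is just naturality of the subobject, and monotonicity of $r$ under inclusions is immediate. I would be careful to note that one does \emph{not} merely get $r(S_{\delta^{o}(\widehat{P}_{\psi})_{V'}})\subseteq S_{\delta^{o}(\widehat{P}_{\psi})_{V''}}$, which would be too weak to force the inclusion upward; the \emph{equality} coming from surjectivity is what lets the chain close. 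Since that equality is supplied by the cited theorem, the proof reduces to assembling the displayed chain of inclusions.
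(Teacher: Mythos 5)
Your proof is correct and follows essentially the same route as the paper: reduce the sieve condition to downward closure, translate membership in $\mathbb{T}^{\psi}_{V'}$ into the inclusion $\underline{S}(V')\supseteq\underline{S}_{\delta^{o}(\widehat{P}_{\psi})_{V'}}$, and close the chain using the subpresheaf property together with the surjectivity (optimality) of the daseinised subobject from Theorem 3.1 of \cite{DI07b}. Your remark that the equality from surjectivity, rather than a mere inclusion, is the essential ingredient matches exactly the point the paper's proof turns on.
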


\begin{proof}
As usual, we identify an inclusion morphism $i_{V^{\prime}V}$ with $V^{\prime
}$ itself, so a sieve on $V$ consists of certain subalgebras of $V$. We have
to show that if $V^{\prime}\in v(\ulcorner\underline{S}\urcorner\in
\mathbb{T}^{\psi})_{V}$ and $V^{\prime\prime}\subset V^{\prime}$, then
$V^{\prime\prime}\in v(\ulcorner\underline{S}\urcorner\in\mathbb{T}^{\psi
})_{V}$. Now, $V^{\prime}\in v(\ulcorner\underline{S}\urcorner\in
\mathbb{T}^{\psi})_{V}$ means that $\underline{S}(V^{\prime})\in
\mathbb{T}_{V^{\prime}}^{\psi}$, which is eqivalent to $\underline
{S}(V^{\prime})\supseteq\underline{S}_{\delta^{o}(\widehat{P}_{\psi})_{V^{\prime}}%
}$. Here, $\underline{S}_{\delta^{o}(\widehat{P}_{\psi})_{V^{\prime}}}$ is the
component at $V^{\prime}$ of the sub-object $\underline{S}_{\delta
^{o}(\widehat{P}_{\psi})}=(\underline{S}_{\delta^{o}(\widehat{P}_{\psi})_{V}%
})_{V\in\mathcal{V(R)}}$ of $\underline{\Sigma}$ obtained from daseinisation
of $\widehat{P}_{\psi}$. According to Thm. 3.1 in \cite{DI07b}, the sub-object
$\underline{S}_{\delta^{o}(\widehat{P}_{\psi})}$ is optimal in the following
sense: when restricting from $V^{\prime}$ to $V^{\prime\prime}$, we have
$\underline{\Sigma}(i_{V^{\prime\prime}V^{\prime}})(\underline{S}_{\delta
^{o}(\widehat{P}_{\psi})_{V^{\prime}}})=\underline{S}_{\delta^{o}(\widehat
{P}_{\psi})_{V^{\prime\prime}}}$, i.e., the restriction is surjective. By
assumption, $\underline{S}(V^{\prime})\supseteq\underline{S}_{\delta^{o}%
(\widehat{P}_{\psi})_{V^{\prime}}}$, which implies%
\[
\underline{S}(V^{\prime\prime})\supseteq\underline{\Sigma}(i_{V^{\prime\prime
}V^{\prime}})(\underline{S}(V^{\prime}))\supseteq\underline{\Sigma
}(i_{V^{\prime\prime}V^{\prime}})(\underline{S}_{\delta^{o}(\widehat{P}_{\psi
})_{V^{\prime}}})=\underline{S}_{\delta^{o}(\widehat{P}_{\psi})_{V^{\prime
\prime}}}.
\]
This shows that $\underline{S}(V^{\prime\prime})\in\mathbb{T}_{V^{\prime\prime}}^{\psi}$ and
hence $V^{\prime\prime}\in v(\ulcorner\underline{S}\urcorner\in\mathbb{T}%
^{\psi})_{V}$.
\end{proof}

\ 

Finally, we have to show that the sieves $v(\ulcorner\underline{S}\urcorner
\in\mathbb{T}^{\psi})_{V}$, $V\in\mathcal{V(R)}$, actually form a global
element of $\underline{\Omega}$, i.e., they all fit together under the 
restriction mappings of the presheaf $\underline{\Omega}$:

\begin{proposition}
The sieves $v(\ulcorner\underline{S}\urcorner\in\mathbb{T}^{\psi})_{V}$,
$V\in\mathcal{V(R)}$, (see Prop. \ref{P_val(S in T)_VIsSieveOnV}) form a
global element of $\underline{\Omega}$.
\end{proposition}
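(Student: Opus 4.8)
The plan is to verify directly that the family $\bigl(v(\ulcorner\underline{S}\urcorner\in\mathbb{T}^{\psi})_V\bigr)_{V\in\mathcal{V(R)}}$ is compatible with the restriction maps of $\underline{\Omega}$. Recall that a global element $\underline{1}\rightarrow\underline{\Omega}$ is precisely a natural transformation out of the terminal presheaf, which unwinds to a choice of one sieve $\sigma_V$ on each $V$ subject to the single requirement $\underline{\Omega}(i_{V'V})(\sigma_V)=\sigma_{V'}$ whenever $V'\subseteq V$. By Proposition \ref{P_val(S in T)_VIsSieveOnV} each $v(\ulcorner\underline{S}\urcorner\in\mathbb{T}^{\psi})_V$ is already known to be a sieve on $V$, so the only thing left to establish is this compatibility condition.

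First I would invoke Lemma \ref{L_PullbackOfSievesOnPoset}, which identifies the restriction (pullback) map of $\underline{\Omega}$ with an intersection: for $V'\subseteq V$ one has $\underline{\Omega}(i_{V'V})(\sigma)=\sigma\cap\downarrow\!\!V'$. Thus the naturality to be proved reduces to the set-theoretic identity
\[
v(\ulcorner\underline{S}\urcorner\in\mathbb{T}^{\psi})_V\cap\downarrow\!\!V'=v(\ulcorner\underline{S}\urcorner\in\mathbb{T}^{\psi})_{V'}.
\]
The key observation is that the defining membership condition is \emph{local}: a subalgebra $W$ belongs to $v(\ulcorner\underline{S}\urcorner\in\mathbb{T}^{\psi})_V$ if and only if $\underline{S}(W)\in\mathbb{T}_W^{\psi}$, a condition that refers only to the stage $W$ itself and not to the ambient context $V$ in which $W$ is being considered. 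Unfolding the definition, the left-hand side is $\{W\in\downarrow\!\!V\mid\underline{S}(W)\in\mathbb{T}_W^{\psi}\}\cap\downarrow\!\!V'$; since $V'\subseteq V$ gives $\downarrow\!\!V'\subseteq\downarrow\!\!V$, this collapses to $\{W\in\downarrow\!\!V'\mid\underline{S}(W)\in\mathbb{T}_W^{\psi}\}$, which is exactly the right-hand side. This settles the required equality and hence the proposition.

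I do not expect any serious obstacle here: this statement is a compatibility check rather than a result with genuine content. The only points demanding care are the correct application of Lemma \ref{L_PullbackOfSievesOnPoset} to rewrite the restriction map as an intersection, and the explicit recognition that the condition $\underline{S}(W)\in\mathbb{T}_W^{\psi}$ is insensitive to the ambient stage, after which the equality is immediate. The substantive work — namely that each $v(\ulcorner\underline{S}\urcorner\in\mathbb{T}^{\psi})_V$ is downward closed — has already been carried out in the preceding proposition, whose proof relied on the optimality (surjectivity of restriction) of the daseinisation subobjects $\underline{S}_{\delta^{o}(\widehat{P}_{\psi})}$.
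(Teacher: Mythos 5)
Your proposal is correct and follows essentially the same route as the paper: both reduce the naturality condition via Lemma \ref{L_PullbackOfSievesOnPoset} to the identity $v(\ulcorner\underline{S}\urcorner\in\mathbb{T}^{\psi})_V\cap\downarrow\!\!V'=v(\ulcorner\underline{S}\urcorner\in\mathbb{T}^{\psi})_{V'}$, and both settle it by noting that membership of $W$ depends only on the condition $\underline{S}(W)\in\mathbb{T}_W^{\psi}$ at stage $W$, independently of the ambient context.
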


\begin{proof}
From Lemma \ref{L_PullbackOfSievesOnPoset}, is suffices to show that, whenever
$V^{\prime}\subset V$, we have $v(\ulcorner\underline{S}\urcorner\in
\mathbb{T}^{\psi})_{V^{\prime}}=v(\ulcorner\underline{S}\urcorner\in
\mathbb{T}^{\psi})_{V}\cap\downarrow\!\!V^{\prime}$. If $V^{\prime\prime}\in
v(\ulcorner\underline{S}\urcorner\in\mathbb{T}^{\psi})_{V^{\prime}}$, then
$\underline{S}(V^{\prime\prime})\in\mathbb{T}_{V^{\prime\prime}}^{\psi}$,
which implies $V^{\prime\prime}\in v(\ulcorner\underline{S}\urcorner
\in\mathbb{T}^{\psi})_{V}$. Conversely, if $V^{\prime\prime}\in
\downarrow\!\!V^{\prime}$ and $V^{\prime\prime}\in v(\ulcorner\underline{S}\urcorner
\in\mathbb{T}^{\psi})_{V}$, then, again, $\underline{S}(V^{\prime\prime}%
)\in\mathbb{T}_{V^{\prime\prime}}^{\psi}$, which implies $V^{\prime\prime}\in
v(\ulcorner\underline{S}\urcorner\in\mathbb{T}^{\psi})_{V^{\prime}}$.
\end{proof}

\ 

The global element $v(\ulcorner\underline{S}\urcorner\in\mathbb{T}^{\psi
})=(v(\ulcorner\underline{S}\urcorner\in\mathbb{T}^{\psi})_{V})_{V\in
\mathcal{V(R)}}$ of $\underline{\Omega}$ is interpreted as the truth-value of
the proposition represented by $\underline{S}\in P_{cl}(\underline{\Sigma})$
if the quantum system is in the state $\psi$ (resp. $\mathbb{T}^{\psi}$). This
assignment of truth-values is

\begin{itemize}
\item contextual, since the contexts $V\in\mathcal{V(R)}$ play a central
r\^{o}le in the whole construction

\item global in the sense that \emph{every} proposition is assigned a truth-value

\item completely independent of any notion of measurement or observer, hence
we call our scheme a `neo-realist' formulation of quantum theory

\item topos-internal, the logical structure is not chosen arbitrarily, but
fixed by the topos $\operatorname*{\mathbf{Set}}^{\mathcal{V(R)}^{op}}$. This topos is
directly motivated from the Kochen-Specker theorem

\item non-Boolean, since there are (a) more truth-values than just
\emph{`true'} and \emph{`false'} and (b) the global elements form a
\emph{Heyting} algebra, not a Boolean algebra. There is a global element $1$
of $\underline{\Omega}$, consisting of the maximal sieve $\downarrow\!\!V$ at
each stage $V$, which is interpreted as `totally true', and there is a global
element $0$ consisting of the empty sieve for all $V$, which is interpreted as
`totally false'. Apart from that, there are many other global elements that
represent truth-values between `totally true' and `totally false'. These
truth-values are neither numbers nor probabilities, but are given by the
logical structure of the presheaf topos $\operatorname*{\mathbf{Set}}^{\mathcal{V(R)}%
^{op}}$. Since a Heyting algebra in particular is a partially ordered set,
there are truth-values $v_{1},v_{2}$ such that neither $v_{1}<v_{2}$ nor
$v_{2}<v_{1}$, which is also different from two-valued Boolean logic where
simply $0<1$ (i.e., `false'$<$`true'). The presheaf topos $\operatorname*{\mathbf{Set}}%
^{\mathcal{V(R)}^{op}}$ has a rich logical structure.
\end{itemize}

\section{Conclusion and outlook\label{_Conclusion}}

\ The formulation of quantum theory within the presheaf topos
$\operatorname*{\mathbf{Set}}^{\mathcal{V(R)}^{op}}$ gives a theory that is remarkably
similar to classical physics from a structural perspective. In particular,
there is a state object (the spectral presheaf $\underline{\Sigma}$) and a
quantity-value object (the presheaf $\underline{\mathbb{R}^{\leftrightarrow}}$
of order-preserving and -reversing functions). Physical quantities are
represented by arrows between $\underline{\Sigma}$ and $\underline
{\mathbb{R}^{\leftrightarrow}}$.

\ 

One of the future tasks will be the incorporation of dynamics. The process of
daseinisation behaves well with respect to the action of unitary operators,
see section 5.2 in \cite{DI07c}, so it is conceivable that there is a
`Heisenberg picture' of dynamics. Commutators remain to be understood in the
topos picture. On the other hand, it is possible to let a truth-object
$\mathbb{T}^{\psi}$ change in time by applying Schr\"{o}dinger evolution to
$\psi$. It remains to be shown how this can be understood topos-internally.

\ 

Mulvey and Banaschewski have recently shown how to define the Gel'fand
spectrum of an abelian $C^{\ast}$-algebra $\mathfrak{A}$ in any Grothendieck 
topos, using constructive methods (see \cite{BM06} and references therein). 
Spitters and Heunen made the following construction in \cite{HS07}: one takes 
a \emph{non-abelian} $C^{\ast}$-algebra $\mathfrak{A}$ and considers the topos 
of (covariant) functors over the category of abelian subalgebras of $\mathfrak{A}$. 
The algebra $\mathfrak{A}$ induces an internal \emph{abelian} 
$C^{\ast}$-algebra $\underline{\mathfrak{A}}$ in this topos of functors. 
(Internally, algebraic operations are only allowed between commuting operators.) 
Spitters and Heunen observed that the Gel'fand spectrum of this internal algebra 
basically is the spectral presheaf.\footnote{The change from presheaves, i.e., 
contravariant functors, to covariant functors is necessary in the constructive 
context.} It is very reassuring that the spectral presheaf not only has a physical 
interpretation, but also such a nice and natural mathematical one. Spitters and 
Heunen also discuss integration theory in the constructive context. These tools 
will be very useful in order to regain actual numbers and expectation values from 
the topos formalism.

\ 

Since the whole topos programme is based on the representation of formal
languages, major generalisations are possible. One can represent the same
language $\mathcal{L}(S)$ in different topoi, as we already did with
$\operatorname*{\mathbf{Set}}$ for classical physics and $\operatorname*{\mathbf{Set}}%
^{\mathcal{V(R)}^{op}}$ for algebraic quantum theory. For physical theories
going beyond this, other topoi will play a r\^{o}le. The biggest task is the
incorporation of space-time concepts, which will, at the very least,
necessitate a change of the base category. It is also conceivable that the
`smooth topoi' of synthetic differential geometry (SDG) will play a r\^{o}le.

\ 

\textbf{Acknowledgements.} I want to thank Chris Isham for numerous
discussions and constant support. Moreover, I would like to thank Bob Coecke,
Samson Abramsky, Jamie Vicary, Bas Spitters and Chris Heunen for stimulating
discussions. This work was supported by the DAAD and the FQXi, which is
gratefully acknowledged. Finally, I want to thank the organisers of the RDQFT
workshop, Bertfried Fauser, J\"{u}rgen Tolksdorf and Eberhard Zeidler, for
inviting me. It was a very enjoyable experience.

\end{document}